\newcommand{\poly}{\operatorname{\textsc{{\rm poly}}}}
\newcommand{\upd}{\operatorname{\textsc{update}}}
\newcommand{\parnt}{\operatorname{\textsc{parent}}}
\newcommand{\citet}[1]{\citeauthor{#1}\xspace\cite{#1}}
\newcounter{mnote}[section]
\definecolor{ForestGreen}{rgb}{0.0333,0.4451,0.0333}
\definecolor{DarkRed}{rgb}{0.65,0,0}
\definecolor{Red}{rgb}{1,0,0}
\newtheorem{definition}{Definition}
\newtheorem{theorem}{Theorem}[section]
\newtheorem{lemma}[theorem]{Lemma}
\newenvironment{claim}[1]{\par\noindent\textbf{Claim.}\space#1}{}
\newenvironment{claimproof}[1]{\par\noindent\emph{Proof.}\space#1}{\hfill $\blacksquare$}
\let\citet\cite
\begin{document}

\author{
        \begin{tabular}[t]{c@{\extracolsep{5em}}c} \centering
                 Niklas Dahlmeier  & \qquad D Ellis Hershkowitz\thanks{Supported by NSF grant CCF-2403236.} \\
                 \small RWTH Aachen   & \small \qquad Brown University \\
                 \small \texttt{dahlmeier@gdm.rwth-aachen.de}  &\small \qquad \texttt{delhersh@gmail.com}
        \end{tabular}
}


\title{Low Recourse Arborescence Forests\\ Under Uniformly Random Arcs}
\date{}

\maketitle

\begin{abstract}
    In this work, we study how to maintain a forest of arborescences of maximum arc cardinality under arc insertions while minimizing recourse---the total number of arcs changed in the maintained solution. This problem is the ``arborescence version'' of max cardinality matching. 
    
    On the impossibility side, we observe that even in this insertion-only model, it is possible for $m$ adversarial arc arrivals to necessarily incur $\Omega(m \cdot n)$ recourse, matching a trivial upper bound of $O(m \cdot n)$. On the possibility side, we give an algorithm with expected recourse $O(m \cdot \log^2 n)$ if all $m$ arcs arrive uniformly at random. 
\end{abstract}  
\thispagestyle{empty}
\newpage
\setcounter{page}{1}

\section{Introduction}
\label{sec:Introduction}

Arborescences (a.k.a.\ directed spanning trees) are one of the most well-studied objects in algorithmic graph theory. Specifically, given a digraph $G = (V,A)$ an arborescence is a subgraph $T \subseteq A$ where every vertex has in-degree at most $1$ and $T$ is a tree if we forget arc directions. Since the formulation of poly-time algorithms for computing a minimum weight arborescence \cite{chu1965shortest,edmonds1967optimum}, arborescences have served as an important case study across areas of algorithms and combinatorial optimization. These include near-linear time, primal-dual, randomized, and approximation algorithms, as well as integral polyhedra / totally unimodular (TU) matrices, edge splitting and tree packing theorems \cite{berczi2009packing,gabow1986efficient,drescher2010approximation,gabow1991matroid,korte2018spanning,bhalgat2008fast,laekhanukit2012rounding}.

However, one notable area in which arborescences are poorly understood is the dynamic setting. In dynamic settings, our goal is to maintain a solution as the input changes over discrete time steps. A well-studied notion of the quality of a dynamic algorithm is that of recourse: the total number of changes made to the solution over time. Low-recourse algorithms are desirable both because recourse lower bounds the time to update a solution and because solutions that do not change very much over time are often desirable in their own right. For these reasons, many classic algorithmic problems have been studied in the low recourse setting, including set cover \cite{gupta2017online,gupta2020fully}, matching \cite{grove1995online,chaudhuri2009online,bosek2014online,bernstein2019online,megow2020online,angelopoulos2020online}, load balancing \cite{gupta2014maintaining,krishnaswamy2023online}, minimum spanning and Steiner tree \cite{imase1991dynamic,gupta2014online,megow2016power,gu2013power}, TSP \cite{megow2016power}, clustering \cite{fichtenberger2021consistent,lattanzi2017consistent,cohen2019fully,lkacki2024fully} and general convex body chasing \cite{argue2021chasing,sellke2023chasing,bansa2018nested,buchbinder2009online,bhattacharya2023chasing}. 

Very recently, Buchbinder et al. \citet{buchbinder2024maintaining} gave low-recourse algorithms for the dynamic version of what we will call the maximum arborescence forest problem.\footnote{More generally, they give results for low recourse matroid intersection where one matroid is a partition matroid and each part of the partition is incrementally revealed.}
\begin{definition}[(Maximum) Arborescence Forest]
    Given a digraph $G = (V, A)$, we say a subgraph $F \subseteq A$ is an arborescence forest if each (weakly) connected component of $F$ is an arborescence. Such a subgraph is maximum if it has maximum arc cardinality among all arborescence forests.
\end{definition}
\noindent We refer to any vertex of an arborescence forest with in-degree $0$ as a \emph{root}. See \Cref{branching example}. The dynamic version of this problem which we study in this work is as follows.

\begin{definition}[Incremental Maximum Arborescence Forest]\label{dfn:incMaxArb}
    In the incremental maximum arborescence forest, we are given a vertex set $V$ and a sequence of arcs $a_1, a_2, \ldots, a_m$ over time steps. In the $i$-th time step we must output a maximum arborescence forest $F^{(i)}$ for digraph $G^{(i)} := (V, \bigcup_{j \leq i} \{ a_j\} )$. Our goal is to minimize recourse, defined as\footnote{We use arc deletions and recourse interchangeably. Recourse is often instead defined as a symmetric difference between solutions. The structure of maximum arborescence forests and the fact that our solutions monotonically increase in size up to $n-1$ means that these two quantities are the same up to a multiplicative $2$ and additive $n-1$.}
    \begin{align*}
        \sum_{i=1}^{m-1} |F^{(i)}\setminus F^{(i+1)}|.
    \end{align*}
\end{definition}
\noindent Buchbinder et al. \citet{buchbinder2024maintaining} showed that $O(n \log ^2 n)$ recourse is possible if $n$ vertices---rather than edges---arrive adversarially. That is, if in each time step all in-arcs of a unique vertex arrive, then $O(n \log ^2 n)$ recourse is possible after $n$ vertex arrivals. On the other hand, to date nothing is known for this problem if arcs, rather than vertices, arrive over time.

\begin{figure}[ht]
    \centering
    \includegraphics[scale=0.75]{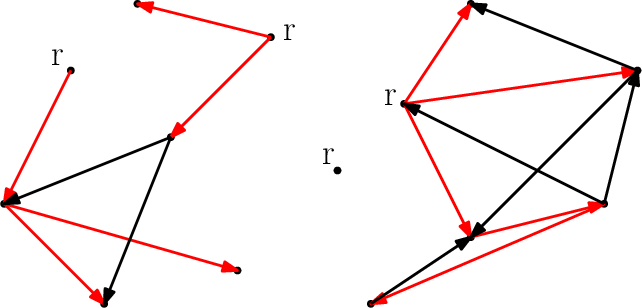}
\caption{Red: an arborescence forest where each root is labeled ``$r$''. Black: all other arcs.}
    \label{branching example}
\end{figure}

\subsection{Our Results}
We give results for incremental maximum arborescence forest when arcs arrive over time.

On the hardness side, we observe that if $m$ arcs arrive non-adaptively but adversarially, then it is impossible to achieve recourse $o(m \cdot n)$.
\begin{restatable}{theorem}{lowerTheorem}\label{thm:lower}
    Given integral $n \geq 0$, there exists a fixed $n$ vertex instance of incremental maximum arborescence forest with $O(n)$ arc insertion such that every solution has recourse at least $\Omega(n^2)$.
\end{restatable}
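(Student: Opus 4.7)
My plan is to construct a specific $n$-vertex instance with $O(n)$ arc insertions such that every valid sequence of maximum arborescence forests incurs $\Omega(n^2)$ total recourse. The main conceptual point is that maximum arborescence forests form the common bases of two matroids---a partition matroid (in-degree at most one per vertex) and the graphic matroid of the underlying undirected graph---so updating after a single arc insertion may require an augmenting path in the matroid-intersection exchange graph whose length is $\Omega(k)$, where $k$ is the current forest size. Each augmenting step of length $\ell$ forces roughly $\ell/2$ arc deletions, so triggering $\Omega(n)$ such long augmenting paths yields the desired $\Omega(n^2)$ total recourse.

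First, I would formalize the observation that when an arc insertion strictly grows the maximum forest size from $k$ to $k+1$, any new maximum forest $F^{(i+1)}$ must include the inserted arc, and moreover $|F^{(i)}\setminus F^{(i+1)}|$ is at least the number of non-terminal arcs in the shortest augmenting path in the matroid intersection exchange graph from $F^{(i)}$ to a maximum of $G^{(i+1)}$. So to lower bound the per-step recourse it suffices to lower bound the length of augmenting paths available at that step.

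Second, I would construct an explicit $n$-vertex, $O(n)$-arc instance that forces long augmenting paths at $\Omega(n)$ many arc arrivals. The construction I have in mind consists of a ``hub-and-chain'' graph with a hub vertex $u$ and a chain $v_1,\dots,v_{n-1}$, in which arcs $(v_i,u)$ and $(u,v_i)$ arrive interleaved in an adversarial order. The adversarial order is chosen so that at $\Omega(n)$ steps, the unique way to extend the max forest by one arc is to reroute the hub's in-arc along a long chain of matroid swaps: the old in-arc of $u$ conflicts (via the partition matroid) with the newly required out-arc, and resolving that conflict cascades down the chain. Crucially, because the max forest size strictly increases at these triggering steps, the old forest is not a max forest of the new graph, so the algorithm is forced into an augmenting update; and because the chain of conflicts is $\Omega(n)$ long, the update requires $\Omega(n)$ arc deletions regardless of how cleverly the previous max forests were chosen.

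Third, I would aggregate the per-step lower bounds: with $\Omega(n)$ triggering steps each forcing $\Omega(n)$ deletions, the total recourse is $\Omega(n^2)$. Since this argument considers an arbitrary sequence of max forests, it applies even to the offline-optimal solution, yielding the theorem.

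The main obstacle is verifying that the explicit construction truly forces long augmenting paths at $\Omega(n)$ many steps, rather than allowing a clever offline choice to short-circuit the cascade by anticipating future arcs. This requires showing that at each triggering step the exchange graph's augmenting paths are genuinely $\Omega(n)$ long --- not merely that one max forest differs significantly from another, but that no max forest at the previous step differs from the new max by $o(n)$ arcs. The analysis should reduce to arguing that the "forced" arcs (arcs appearing in every max forest of $G^{(i)}$) change by $\Omega(n)$ across these steps, which pins down the lower bound even against an offline, adaptive choice of forests.
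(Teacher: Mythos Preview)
Your high-level framework---tracking ``forced'' arcs and showing that the set of arcs common to every maximum arborescence forest changes by $\Omega(n)$ across $\Omega(n)$ steps---is sound and is essentially what the paper does. The gap is entirely in the construction.

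The ``hub-and-chain'' instance you describe, with hub $u$ and arcs $(v_i,u)$ and $(u,v_i)$, is a bidirected star. In a star, once several spokes are present the maximum arborescence forest is far from unique: for any choice of root $v_j$ (or $u$ itself) there is a spanning arborescence, and any two such arborescences differ in at most two arcs (swap $(v_j,u)$ for $(v_k,u)$ and $(u,v_k)$ for $(u,v_j)$). Consequently the exchange graph always has augmenting paths of length $O(1)$, and an offline algorithm can move between consecutive maxima with $O(1)$ deletions per step. Total recourse on your instance is $O(n)$, not $\Omega(n^2)$. The very obstacle you flag---that an offline choice may ``short-circuit the cascade''---kills this construction.

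What the paper does instead is use a \emph{bidirected path}: arcs are inserted alternately on the left and right ends so that at every other step the maximum arborescence forest is \emph{unique} and consists of the entire path oriented either fully left-to-right or fully right-to-left. Because these two orientations share no arcs, any sequence of maxima must delete $\Omega(k)$ arcs between the $(2k{+}1)$st and $(2k{+}3)$rd insertions (by the triangle inequality $\sum_i |F^{(i)}\setminus F^{(i+1)}|\ge |F^{(a)}\setminus F^{(b)}|$), yielding $\sum_k \Omega(k)=\Omega(n^2)$. Uniqueness at the ``triggering'' steps is precisely what pins down the forced arcs and removes the offline adversary's freedom; your star never achieves it.
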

\noindent Note that it is always possible to achieve recourse $O(n)$ per each arc insertion (as one can fully change the maintained solution with $O(n)$ recourse). Thus, the above impossibility result matches a trivial upper bound of $O(m \cdot n)$. (In \Cref{subsection: adversarial case} 
we observe that a similar lower bound holds for the problem of dynamically maintaining a minimum-cost arborescence, even when all edges have cost $0$ or $1$ and parallel arcs are prohibited.)

Since adversarial arc arrivals prohibit nontrivial upper bounds, it becomes natural to study random arc arrivals. On the possibility side, we demonstrate that if $m$ arcs arrive uniformly at random chosen from the complete graph, then expected recourse $O(m \log^2 n)$ is always possible.
\begin{restatable}{theorem}{upperTheorem}\label{thm:upper}
    There is a polynomial-time algorithm for incremental maximum arborescence forest with expected recourse $O(m \log ^ 2 n)$ if all $m$ arcs arrive uniformly at random.
\end{restatable}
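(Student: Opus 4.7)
My plan is to maintain a maximum arborescence forest $F$ throughout the arrival sequence and update it only when a newly arrived arc allows the matroid-intersection optimum to grow. Concretely, view the problem as maximum common independent set in the partition matroid (at most one in-arc per vertex) and the graphic matroid (no undirected cycle). On arrival of $(u,v)$: if $v$ is a root of its tree in $F$ and $u,v$ lie in different weakly-connected components of $F$, just add $(u,v)$; otherwise, construct the standard exchange graph with respect to $F$ and $(u,v)$ and search for a shortest augmenting path. If one exists, apply the associated swap sequence (which adds one net arc and re-routes the arcs on the path); if none exists, $F$ is already optimal and we do nothing. Correctness follows from standard matroid-intersection arguments: $F$ stays a common independent set, and it is of maximum size because we augment exactly when an augmenting path exists.

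The recourse analysis then has two ingredients. First, since $|F| \leq n-1$, there are at most $\min(m, n-1)$ augmenting steps in the entire execution, and non-augmenting arrivals contribute zero recourse by design. Hence the total recourse equals the sum of the augmenting-path lengths. Second, the key lemma I would prove is that under uniformly random arcs, each augmenting path has expected length $O(\log^2 n)$. Given the lemma, the total expected recourse is at most $\min(m,n-1)\cdot O(\log^2 n) = O(m \log^2 n)$, because when $m \leq n-1$ the first factor caps at $m$, and when $m \geq n$ the total is $O(n \log^2 n) \leq O(m \log^2 n)$.

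The heart of the proof—and the main obstacle—is the expected-path-length bound. My plan is to couple the arrival order with i.i.d.\ uniform priorities on arcs, so that the arc arrival order is a uniformly random permutation; then argue that the trees of $F$ behave like random recursive (treap-like) trees in which the algorithm selects each vertex's parent based on its priority history, yielding expected depth $O(\log n)$. An augmenting path in the exchange graph alternates between ``replace an in-arc at some vertex $w$'' moves (from the partition matroid) and ``swap across a tree edge'' moves (from the graphic matroid); I would bound such a path by $O(\log n)$ replace-swap phases, each traversing $O(\log n)$ arcs within a single tree, for $O(\log^2 n)$ total.

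The difficulty is that, unlike a pure random BST, the algorithm's tree choices are jointly constrained by arc directions and the acyclicity requirement, and augmentations themselves restructure the trees; so the coupling to a clean random-tree model must be made carefully. If this direct approach proves recalcitrant, the fallback is a reduction to the Buchbinder et al.\ vertex-arrival result: define an ``effective vertex arrival'' for $v$ at the first random time $v$ acquires enough in-arcs to participate in an augmentation, argue that the random arc order spaces these effective arrivals in an amortizable way, and invoke their $O(n \log^2 n)$ bound, which again sits inside $O(m \log^2 n)$.
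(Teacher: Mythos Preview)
Your proposal identifies the right high-level shape---augment only when the optimum grows and bound the total augmentation cost---but the core technical content is missing, and both the algorithm and the analysis you sketch diverge substantially from what the paper actually does.

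First, your ``key lemma'' that each augmenting path has expected length $O(\log^2 n)$ is asserted with only a heuristic sketch. The random-recursive-tree coupling would need to show that, after arbitrarily many augmentation restructurings, the trees in $F$ still have logarithmic depth; you yourself flag that augmentations rearrange the trees in ways that break the treap analogy, and you supply no mechanism to control this. The fallback reduction to the Buchbinder et al.\ vertex-arrival bound is equally incomplete: that model requires \emph{all} in-arcs of a vertex to arrive simultaneously, and random single-arc arrivals do not simulate this in any evident way.

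Second, and more fundamentally, the paper never bounds individual augmenting-path lengths; its only per-update bound is $|T'|$, the size of the destination arborescence, which can be $\Theta(n)$. The real argument is a per-vertex amortization driven by the component structure of $D(n,p)$: once $p \geq 1.6/n$, with high probability no in-component has size in the interval $[10\log n,\alpha n]$. The algorithm updates along specially structured \emph{feasible paths} (not generic shortest augmenting paths), and this choice guarantees that (i) every root's in-component lies inside its arborescence and (ii) a merge leaves the surviving root's in-component unchanged. Consequently each vertex can be absorbed into an ``in-large'' arborescence at most $O(1)$ times, and once it sits in an ``in-small'' one its parent changes only with probability $O((\log n)/n)$ per arrival. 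Summing over vertices and over the two regimes $p<2/n$ and $p\geq 1.6/n$ yields $O(n\log^2 n)$ total recourse. This random-graph structural argument is the substance of the proof, and your proposal contains nothing playing that role; replacing feasible paths by shortest exchange-graph paths would also forfeit properties (i) and (ii), so even grafting your algorithm onto the paper's analysis would not obviously work.
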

\noindent Similar models where updates happen uniformly at random have been studied by other works in dynamic algorithms. See, for example, work on densest subgraph \cite{epasto2015efficient} and subgraph counting \cite{henzinger2022complexity}.

Our results can be seen as a natural complement to existing work in maximum cardinality (bipartite) matching. Specifically, maximum arborescence forest is, in some sense, to arborescence problems as maximum cardinality matching is to matching problems: like maximum arborescence forest, maximum bipartite matching is a special case of finding the largest common independent set of two matroids.\footnote{For maximum matching this is the intersection of two partition matroids whereas for maximum arborescence forest this is the intersection of a partition matroid and the graphic matroid.} Furthermore, like maximum arborescence forest, in maximum bipartite matching $O(n \log ^2 n)$ recourse is possible in adversarial vertex arrival models but, similar to our result, $\Omega(n^2)$ recourse is necessary in adversarial edge arrival models \cite{bernstein2019online}. Furthermore, motivated similarly to us, Bernstein and Dudeja \citet{bernstein2020online} studied random edge arrival models for low-recourse bipartite matching, showing that if the graph is chosen adversarially but $n$ edges arrive in a uniformly random order, expected recourse of $O(n \log n)$ and $O(n \log ^ 2 n)$ on paths and trees respectively is possible.\footnote{This work also shows that recourse $\Omega\left(\frac{n^2}{\log n}\right)$ is necessary for general graphs in this adversarial graph but random edge arrival setting.} We show that similar results are possible for maximum arborescence forest on \emph{general graphs} when arcs arrive uniformly at random (among all possible arcs, not an adversarially-chosen subset).

\subsection{Challenges and Techniques}
Before giving a formal description of our results, we give an overview of challenges and techniques.

\subsubsection{Adversarial Recourse Lower Bound}
Our recourse lower bound of $\Omega(n^2)$ over the course of $O(n)$ arc insertions follows from a simple ``bidirected'' path example.

The basic idea is to insert arcs pointing in both directions on a path so that there is always a unique maximum arborescence forest whose nontrivial arborescence either goes from left to right or right to left in the path. We then repeatedly add arcs to the end of this nontrivial arborescence to flip the direction that it points so that once $\Omega(n)$ arcs have been inserted each new arc inserted incurs recourse $\Omega(n)$.

\subsubsection{Randomized Recourse Upper Bound}
Most of our work focuses on showing our upper bound in the uniformly random arc arrival setting. 

\paragraph{Basic Idea of the Algorithm.}
We begin by discussing the intuition for our algorithm. It is not too hard to see that a forest of arborescences $F$ is maximum in a graph $G$ if and only if there do not exist distinct roots $r$ and $r'$ of $F$ with a path from $r$ to $r'$ in $G$ (\Cref{onlyroots}).

One natural algorithmic idea, then, is to repeatedly reconfigure one's solution to guarantee that no such paths exist. It is not too hard to see that adding any such path to a solution and appropriately deleting parent arcs of vertices in such a path results in a new arborescence forest of size one larger than the previous (\Cref{lem:correctness}). Furthermore, the result of reconfiguring along such a path from $r$ to $r'$ is that $r$ remains a root but $r'$ is no longer a root; all of the vertices that used to have $r'$ as the root of their arborescence now have $r$ as their root.

\paragraph{A Weak Recourse Bound.}
However, since our goal is ultimately a low recourse algorithm, we must guarantee that we do not need to perform too many such reconfigurations. Likewise, we must guarantee that the paths between roots along which we update are not too long (and therefore do not result in too many parent arc deletions). 

Regarding the former point, one can see that if we already had a maximum arborescence forest, we need only perform at most one such reconfiguration when a new arc arrives. Intuitively, this is because adding a single arc to a graph increases its maximum arborescence forest cardinality by at most $1$ (\Cref{lem:correctness}). 

Regarding the latter point, we can show that if there exists some path between two roots, then there always exists a nicely structured one (which we call a ``feasible path''; see \Cref{dfn:feasiblePath}) between some other roots $r$ and $r'$. These nicely structured paths satisfy the property that, if we update along them, the maximum number of arc deletions we incur is at most the size of the arborescence of $r'$ in the current arborescence forest. Of course, in general, the size of the arborescence of $r'$ may be as large as $\Omega(n)$. Nonetheless, we will leverage this seemingly weak bound on the arborescence to argue $O(\log ^ 2 n)$ amortized recourse.

The key to leveraging the above weak recourse bound will be to observe that a given vertex $v$ must delete its parent arc in our solution only when it is contained in the arborescence of a root $r'$ where we just added a new arc incident to the in-component of $r'$.\footnote{The in-component of a vertex $v$ in a digraph comprises all vertices that can reach $v$. The out-component of $v$ comprises all the vertices that $v$ can reach and the strongly connected component of $v$ is the intersection of its in- and out-components.} Thus, if the in-component of $r'$ is very small, we do not expect $v$ to be forced to delete its parent arc very often.

Of course, in general, the in-component of $r'$ can have size $\Omega(n)$ and so it will not \emph{always} be the case that $v$ is contained in an arborescence whose root has a small in-component.

\paragraph{Using Random Graph Structure.}
The last piece of our strategy is an amortization argument which shows that, by properties of our random model and the structural properties of our paths, \emph{on average} vertices tend to be in arborescences whose in-components are small and so they need not delete their parent arcs too often.

The first step towards arguing this is to observe that our random model can be equivalently understood in a way that allows us to inherit some structural results from $D(n,p)$, the random directed $n$-node graph where each arc is independently present with probability $p$. In particular, if we assign to each arc a uniformly random value in $[0,1]$ and then sequentially add the $m$ arcs with lowest value (in ascending order of value) then this is identical to our model. This then allows us to note that if we take all arcs whose value is less than some $p$, then our graph at this point is distributed identically to $D(n,p)$ and so we can make use of structural theorems known for $D(n,p)$. 

Likewise, we can argue that a desirable property is always exhibited by our graph by either (1) noting that the property is monotone and so proving it for $D(n,p)$ proves it for $p' < p$ or (2) subdividing $[0,1]$ into small intervals each of which contains at most one arc, proving the property for each interval with high probability, and then union bounding over all intervals.

Notably, for $p$ much larger than $1/n$ it is known that $D(n,p)$ has no in-components of size in $[\Omega(\log n), O(n)]$ for a suitable hidden constant (\Cref{lemma: at most log n helper}) with high probability. Furthermore, our algorithm guarantees:
\begin{itemize}
    \item The in-component of a root is always fully contained in its arborescence (\Cref{lem:inCompInArb}) and;
    \item Merging the arborescence of $r'$ into that of $r$ does not change the in-component of $r$ (\Cref{lem:presInComp}).
\end{itemize}
Thus, even if a vertex $v$ is in an arborescence whose root has an in-component of size $\Omega(n)$, it can only be merged into $O(1)$ such arborescences until only arborescences with in-components of size $O(\log n)$ remain (since otherwise its arborescence would have size $> n$). Once $v$ is merged into such an arborescence whose root's in-component has size $O(\log n)$, $v$ will contribute essentially nothing in recourse for the subsequent $\Theta(n / \log n)$ rounds (arc insertions).

More precisely, we divide our recourse analysis into two phases, which correspond to $p < 2/n$ and $p > 2/n$. 
The two phases use somewhat sophisticated amortization arguments similar to the above to show amortized recourse $O(\log n)$ and $O(\log ^ 2 n)$ respectively.


\subsubsection{Discussion of Techniques}
While our results make use of random graph structure, the key insight of our work is not necessarily only applicable in uniformly random graphs. In particular, the idea that structuring one's solution so that, on average, vertices tend to be in arborescences whose roots have small in-components might feasibly be useful for other randomized models. 

For instance, our techniques may prove useful for the above-mentioned model in which the graph is fixed adversarially but arcs arrive in a random order. Notably, it is possible to apply the above insight to our lower bound instance to achieve low recourse in this random model for this particular graph. Whether this is possible in general graphs is an exciting open question for future work. However, some tempering of expectations is likely necessary: low-recourse algorithms have been proven impossible in this model for the closely-related maximum matching problem \cite{bernstein2020online} and so such results may also ultimately be impossible for arborescences.






\section{Lower Bound for Adversarial Arc Arrivals} \label{section: dynamic graphs}

In this section, we briefly explain a simple example that demonstrates that if arcs arrive adversarially, then, in general, recourse $\Omega(m \cdot n)$ is necessary after $m$ arc arrivals for maximum arborescence forest. Notably, this example is non-adaptive in the sense that the adversary can fix all arc insertions ahead of time without knowing what maximum arborescence forests the algorithm outputs. The idea is to build a bi-directed path of length $\Theta(n)$ such that the root of this path is always on one of the two ends. Then a new incoming arc is added to the other side of the path such that all arcs need to get flipped. This newly added arc then gets bi-directed as well and this process continues until the graph is strongly connected. See \Cref{fig:adversarialarborescence} for an illustration of the construction.

\lowerTheorem*
\begin{proof}	
	We now provide additional details. Suppose for sake of simplicity that $n$ is even (the odd case is identical except the vertex $v_{n/2}$ is not defined). Our graph consists of vertices $v_1, v_2, \ldots, v_n$.
	
	The first inserted arc $a^{(1)}$ is from $v_{n/2}$ to vertex $v_{n/2 + 1}$. At this point the root of the arborescence containing $v_{n/2 + 1}$ is $v_{n/2}$. The second arc now ``bidirects'' the first, namely we add arc $a^{(2)} = (v_{n/2+1}, v_{n/2})$.The next arc forces a change in the maximum arborescence forest. Namely $a^{(3)}$ goes from vertex $v_{n/2 + 2}$ to vertex $v_{n/2 + 1}$. The addition of this arc adds vertices $v_{n/2}$ and $v_{n/2 + 1}$ to the arborescence rooted at vertex $v_{n/2 + 2}$. Again the next arc added will ``bidirect'' the just added arc, namely $a^{(4)}$ goes from $v_{n/2 + 1}$ to $v_{n/2 + 2}$. 
	
	This process of repeatedly adding and bidirecting arcs to each side of the non-singleton arborescence continues until the graph is strongly connected. Every second arc insertion forces a recourse of $\frac{m-1}{2}$ after the insertion of $m$ arcs which gives a total recourse of $\Omega( n^2)$ for $m = \Omega(n)$.
\end{proof}
\noindent We note that our example is somewhat reminiscent of an example of Bernstein et al. \citet{bernstein2019online} which shows similar recourse lower bounds for adversarial edge arrival models.
    \begin{figure}[H]
    \centering
    \includegraphics[height=6cm, width=5cm]{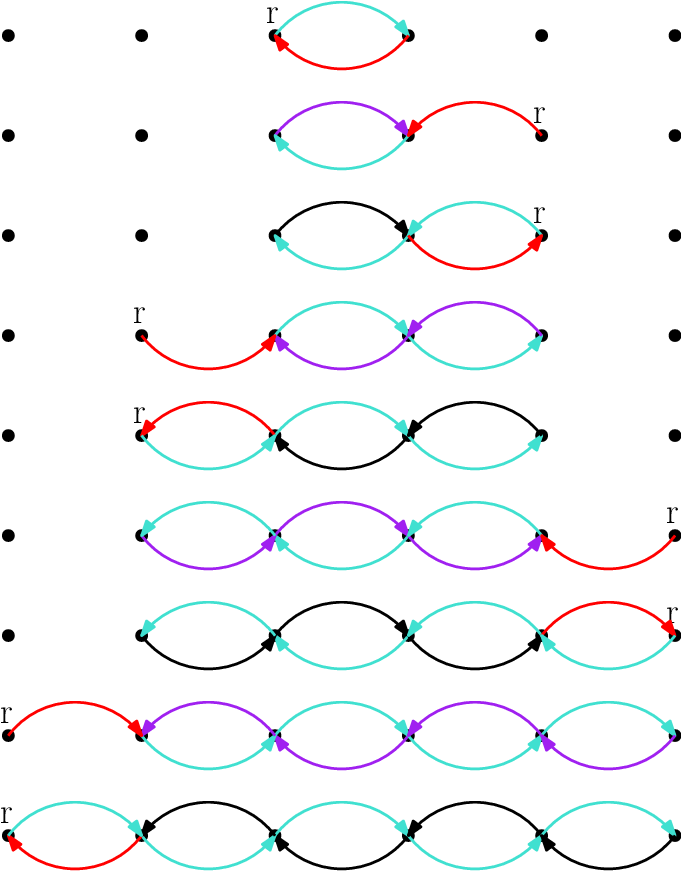} 
    \caption{Illustration for \Cref{thm:lower}. The $i$th row is $G^{(i)}$. Red arcs always denote the most recently added arc, turquoise arcs form the current arborescence forest and violet arcs are arcs deleted from the solution and hence count towards recourse.}
    \label{fig:adversarialarborescence}
\end{figure}

\section{Upper Bound for Uniformly Random Arc Arrivals}
\label{section: proof strategies}

In this section, we show how to achieve expected recourse $O(m \log ^ 2 n)$ over the course of $m$ arc insertions. Namely, we prove the following theorem.

\upperTheorem*

\subsection{Algorithm Description}
We begin by describing our algorithm for incremental maximum arborescence forest. Given an arborescence forest $F$ and a non-root vertex $v$, we let $\parnt(v)$ give the parent arc of $v$ in $F$ (i.e.\ the unique in-arc of $v$ in the arborescence which contains $v$ in $F$). To compute $F^{(i)}$ given $F^{(i-1)}$ we check if adding $a_{i}$ created \emph{some path} between two distinct roots of $F^{(i-1)}$. If doing so does, then we select a \emph{certain structured path} between roots, remove the parent arcs of vertices internal to this path and then add this path to $F^{(i-1)}$ to produce $F^{(i)}$. Note that there could be multiple roots that can reach a root after adding $a_{i}$ (though the destination root is the same for all) and we specify which one is chosen. 

More formally, the way we update along paths is captured by the following update.
\begin{definition}[Path Update]
    Let $F$ be an arborescence forest of graph $G$ with roots $r$ and $r'$ and let $P = (r, v_1, v_2, \ldots, v_k, r')$ be a path of $G$ from $r$ to $r'$. Then we define 
        \begin{align*}
            \upd(F, P) := \left(F \setminus \bigcup_i \parnt(v_i)\right) \cup P.
        \end{align*}
\end{definition}
The forms of the paths along which we update are given by the following notion of feasible paths. Roughly, these are paths that follow one arborescence, take the added arc and then stay inside a single strongly connected component. 
\begin{definition}[Feasible Paths]\label{dfn:feasiblePath}
    Let $F$ be an arborescence forest of graph $G$ containing two arborescences $T$ and $T'$ with respective roots $r$ and $r'$. Then we say that a path $P \subseteq G$ from $r$ to $r'$ is feasible if it is of the form $P = P_A \oplus P_{V}$ where the arcs of $P_A$ are contained in $T$ and the vertices of $P_V$ are contained in $T'$.
\end{definition}

Using the above definitions, we formally define our algorithm in pseudocode in \Cref{alg:main}.

\begin{algorithm}[H]
    \caption{Algorithm for Incremental Maximum Arborescence Forest}
    \label{alg:main}
    \begin{algorithmic}[0] 
            \State \textbf{Input:} Vertex set $V$ and a sequence of arcs $a_1, a_2, \ldots, a_m$ 
            \State \textbf{Output:} $F^{(1)}, F^{(2)}, \ldots, F^{(m)}$ s.t.\ each $F^{(i)}$ is a max arborescence forest for $G^{(i)}:= (V, \bigcup_{j \leq i} \{a_j\})$
            \State Let $F^{(0)} = (V, \emptyset)$
            \For{$i = 1, \ldots, m$}
                \If{$F^{(i-1)}$ has a feasible path $P$ in $G^{(i)}$ (\Cref{dfn:feasiblePath})}
                    \State $F^{(i)} \gets \upd\left(F^{(i-1)}, P\right)$
                    \Else
                    \State $F^{(i)} \gets F^{(i-1)} $
                \EndIf
            \EndFor 
        \Return $F^{(0)}, F^{(1)}, \ldots, F^{(m)}$
    \end{algorithmic}
\end{algorithm}

\subsection{Correctness of Algorithm}

We begin by proving that \Cref{alg:main} is correct; namely, we prove that $F^{(i)}$ is a maximum arborescence forest for $G^{(i)}$ for every $i$.

We begin by observing that our update results in a forest of arborescences.
\begin{lemma}\label{lem:updatePresForest}
    Let $G$ be a digraph and let $F \subseteq G$ be an arborescence forest with a path $P \subseteq G$ between distinct roots of $F$. Then $\upd(F, P)$ is a forest of arborescences.
\end{lemma}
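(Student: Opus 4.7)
My plan is to verify the two defining properties of an arborescence forest for $\upd(F,P)$: every vertex has in-degree at most one, and the underlying undirected graph is acyclic. The first is by case analysis. Vertices not on $P$ retain their (at most one) in-arc from $F$. The start $r$ had in-degree $0$ in $F$ as a root and $\upd$ adds no in-arc to $r$. Each internal vertex $v_i$ has $\parnt(v_i)$ removed and gains exactly one new in-arc, the $P$-arc $(v_{i-1}, v_i)$ (with the convention $v_0 := r$). The endpoint $r'$ was a root of $F$, so had in-degree $0$, and gains only the single in-arc $(v_k, r')$.

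Next I would argue $\upd(F,P)$ has no directed cycle. Suppose for contradiction that $C$ is such a cycle. By the in-degree bound just established, each vertex of $C$ has in-degree exactly $1$ in $\upd(F,P)$, and this unique in-arc must itself be a $C$-arc. Since $F$ is acyclic, $C$ must use some arc of $P$, so let $u$ be the vertex of $C$ with smallest index along $P = (r, v_1, \dots, v_k, r')$. Its unique in-arc in $\upd(F,P)$ is the preceding $P$-arc, which forces the previous vertex of $P$ to also lie on $C$, contradicting minimality---unless that predecessor is $r$ itself, but $r$ has in-degree $0$ and so cannot lie on any directed cycle.

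Finally, any digraph with every in-degree at most $1$ and no directed cycle must have an acyclic underlying undirected graph: in any undirected cycle that is not consistently oriented, some vertex has both incident cycle-edges pointing inward, giving in-degree at least $2$, a contradiction; and a consistently oriented cycle is a directed cycle, already ruled out. Hence the undirected structure of $\upd(F,P)$ is a forest, and combined with the in-degree bound, each connected component is an arborescence, as desired.

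I expect the main obstacle to be the directed-cycle step: the predecessor-chasing argument along $P$ is slightly subtle and requires carefully distinguishing the $P$-arcs (which provide the in-arcs for path-internal vertices in $\upd(F,P)$) from the $F$-arcs (which provide in-arcs for off-path vertices), together with the base-case observation that the chain of predecessors along $P$ terminates at $r$.
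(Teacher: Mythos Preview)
Your proof is correct. Both you and the paper first establish that every vertex of $\upd(F,P)$ has in-degree at most $1$, and then rule out cycles in the underlying undirected graph; the difference lies in how the second step is carried out.

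The paper argues directly about a putative undirected cycle: it takes a maximal contiguous segment $P'$ of the cycle consisting of arcs in $F'\setminus P$, notes that the two endpoints of $P'$ lie on $P$ (by maximality), and hence their only possible $F'$-in-arcs are $P$-arcs; therefore the $P'$-arcs at those endpoints must point outward, forcing a direction flip inside $P'$ and thus a vertex of in-degree $\geq 2$. You instead factor the argument through an intermediate step: first exclude \emph{directed} cycles by the predecessor-chasing along $P$ (the unique in-arc of each on-$P$ vertex in $F'$ is its preceding $P$-arc, which drives the minimal-index vertex down to $r$, a root of in-degree $0$), and then invoke the general fact that any digraph with maximum in-degree $1$ and no directed cycle has an acyclic underlying undirected graph. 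Your route is slightly longer but more modular, since the final lemma is a clean standalone statement about digraphs; the paper's route is shorter and exploits the specific $P$-versus-$(F'\setminus P)$ decomposition in one shot. Both arrive at the same in-degree-$2$ contradiction, just organized differently.
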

\begin{proof}

Let $P = (r, v_1, v_2, \ldots, v_k, r')$ be the path and let $F' = \upd(F, P)$. First, observe that $\upd$ preserves the in-degree of all vertices, except for $r'$ which has its in-degree increased from $0$ to $1$. Since $F$ was a forest of arborescences, it follows that every vertex in $F'$ has in-degree at most $1$. Thus, to show that $F'$ is a forest of arborescences, it suffices to show that $F'$ contains no cycles if we forget about arc directions.

Suppose for the sake of contradiction that such a cycle existed and let $P'$ be a maximal contiguous subpath (arcs do not necessarily point in the same direction) of this cycle where all edges of $P'$ are contained in $F' \setminus P$ (such a subpath must exist since $P$ is itself a path). Let $u$ and $v$ be the first and last vertices of $P'$ with incident arcs $a_v$ and $a_u$ in $P'$. By the assumption that $a_v, a_u \in F' \setminus P$ we know that $a_v$ and $a_u$ have $v$ and $u$ as their tails (i.e.\ the arcs point \emph{away} from $u$ and $v$) since otherwise they would have been the parent arcs of $u$ and $v$ and would have been deleted by $\upd$. However, it follows that some vertex in $P'$ has in-degree at least $2$ in $F'$, contradicting the above argument that $F'$ has in-degree at most $1$. 
\end{proof}

We next state a simple helper lemma characterizing the maximum arborescence forests as those which have no paths between roots. This lemma is a special case of more general theorems known regarding matroid intersection---see e.g.\ \cite{cunningham1986improved}---but is considerably simpler in the special case of maximum arborescence forests and so we include a proof for completeness.
\begin{lemma}
\label{onlyroots}
    Given a digraph $G$, an arborescence forest $F \subseteq G$ is maximum if and only if there do not exist distinct roots $r$ and $r'$ of $F$ where $r$ has a path to $r'$ in $G$.
\end{lemma}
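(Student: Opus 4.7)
The plan is to prove both directions by contrapositive, using \Cref{lem:updatePresForest} for the easy direction and a pigeonhole plus lowest common ancestor (LCA) argument for the harder direction.

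For the forward direction, I would assume some path exists between two distinct roots of $F$ in $G$ and use it to produce a strictly larger arborescence forest. Specifically, take a shortest such path $P = (r, v_1, \ldots, v_k, r')$, so that each internal vertex $v_i$ is a non-root of $F$ and thus has a well-defined parent arc $\parnt(v_i) \in F$; these $k$ parent arcs are distinct. Then \Cref{lem:updatePresForest} already guarantees that $\upd(F, P)$ is an arborescence forest, and a direct count gives $|\upd(F, P)| = (|F| - k) + (k+1) = |F| + 1$ (removing $k$ distinct parent arcs and adding the $k+1$ arcs of $P$, none of which can lie in $F\setminus\bigcup_i \parnt(v_i)$ since any arc of $P$ already in $F$ is the parent arc of the corresponding head), contradicting maximality of $F$.

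For the reverse direction, I would assume $F$ is not maximum, fix any arborescence forest $F^*$ with $|F^*| > |F|$, and produce a path in $G$ between two distinct roots of $F$. Since the number of roots of an arborescence forest on $n$ vertices equals $n$ minus its number of arcs, the set $R$ of roots of $F$ strictly exceeds the set $R^*$ of roots of $F^*$ in cardinality. The map $f : R \to R^*$ sending each root of $F$ to the root of its arborescence in $F^*$ is therefore not injective, so some distinct $r_1, r_2 \in R$ share a common $F^*$-root $s$. Let $u$ be the LCA of $r_1$ and $r_2$ in the arborescence of $F^*$ rooted at $s$; if $u \in \{r_1, r_2\}$ then one of them is an $F^*$-ancestor of the other, immediately yielding the desired path in $F^* \subseteq G$ between two distinct roots of $F$.

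The main obstacle is the remaining case $u \notin \{r_1, r_2\}$, in which $u$ is an interior vertex of the $F^*$-arborescence that need not itself be a root of $F$, so the $F^*$-paths $u \to r_1$ and $u \to r_2$ alone do not connect two roots of $F$. The fix is to bring in the $F$-structure at $u$: let $r_u \in R$ be the root of the arborescence of $F$ that contains $u$. Then $F$ contains a path from $r_u$ to $u$, and concatenating it with the $F^*$-paths $u \to r_i$ produces paths in $G$ from $r_u$ to each of $r_1$ and $r_2$. Since $r_1 \neq r_2$, at least one of them differs from $r_u$, giving the required path in $G$ between two distinct roots of $F$.
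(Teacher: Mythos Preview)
Your proof is correct and follows the same overall strategy as the paper: for the forward direction you augment along a root-to-root path to obtain a strictly larger arborescence forest, and for the reverse direction you pigeonhole two $F$-roots into a single arborescence of a larger forest and then extract a path via a common ancestor together with that ancestor's $F$-root. Your presentation is slightly tidier---you invoke \Cref{lem:updatePresForest} plus a direct cardinality count instead of the paper's ad-hoc augmentation, and you use the LCA of the two roots in $F^*$ rather than the paper's detour through the $F^*$-arborescence root and a strong-component case split---but the underlying ideas are the same.
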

\begin{proof}
 First, suppose that we have a directed path $P$ from some root $r$ to some other root $r'$ and denote the arborescence starting at $r'$ by $T$. Now, simply backtrack this path $P$ starting at vertex $r'$ until we hit the first vertex that does not belong to $T$ and call that vertex $v$. Let $r''$ be the root of said vertex $v$. Now $r''$ can become the new root of the arborescence $T$ that started at $r'$ simply by adding the path from $v$ to $r'$ and deleting the unique incoming arcs of the arborescence $T$ to the vertices in the path $P$ from $v$ to $r'$. This shows that the arborescence forest was not maximum.

For the other direction, suppose the arborescence forest $F$ is not maximum, then there is a different arborescence forest $F'$ with strictly fewer arborescences that still covers all vertices. By the pigeonhole principle, there must be at least one arborescence $T$ in $F'$ that contains at least two roots $r$ and $r'$ from the original arborescence forest. Denote by $r''$ the root of this arborescence $T$. If there is a path from $r$ to $r'$ or vice versa we are done. Else we know that $r$ and $r'$ are not in the strong component of $r''$. We also know that there must be some root $s$ in the original arborescence forest $F$ whose arborescence contains $r''$. Note that $s$ is distinct from $r$ and $r'$ since it can reach $r''$ while the other two cannot. This root $s$ can clearly connect to $r$ and $r'$, meaning that there is a directed path connecting two distinct roots of $F$. 
\end{proof}

\begin{figure}
    \centering
    \begin{subfigure}{0.4\textwidth}
    \includegraphics[width=0.95\linewidth]{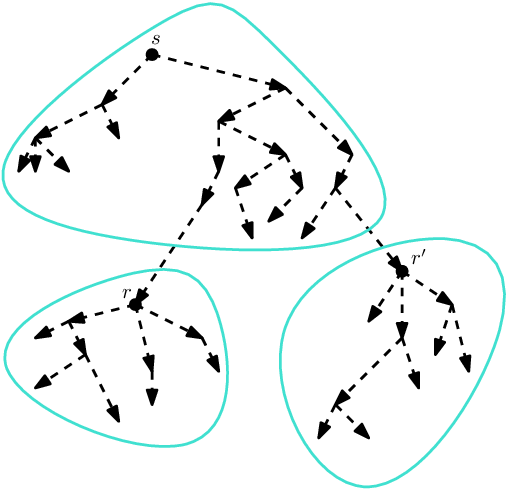} 
    \caption{}
    
    \end{subfigure}
    \begin{subfigure}{0.4\textwidth}
    \includegraphics[width=0.95\linewidth]{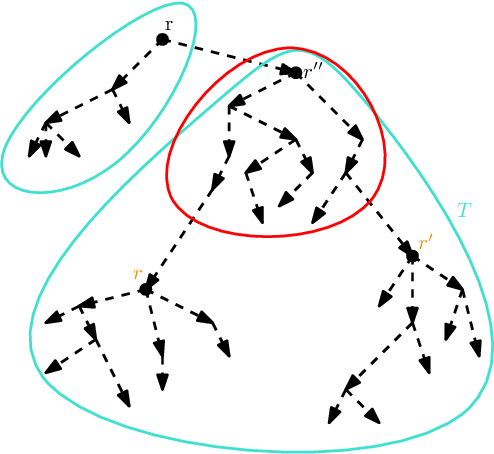} 
    \caption{}
    
    \end{subfigure}
    \caption{Illustration of the second direction of proof of \Cref{onlyroots}. Turquoise: arborescences. Red: strongly connected component of $r''$. Arcs dashed to signal that there could be more arcs. Root labels of $s, r, r'$ and $r''$ correspond to roots in proof. Roots $r$ and $r'$ on right in orange since they are no longer roots.}
    \label{fig:root to root image}
\end{figure}

\begin{lemma}\label{lem:inCompInArb}
    Given a digraph $G$ and a maximum arborescence forest $F$ with a root $r$, if $v$ can reach $r$ in $G$ then $v$ is in the arborescence of $r$ in $F$.
\end{lemma}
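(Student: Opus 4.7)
The plan is to prove the contrapositive-style argument by invoking the characterization of maximum arborescence forests given by \Cref{onlyroots}. Suppose $v$ can reach $r$ in $G$, and let $r_v$ denote the root of the arborescence of $F$ that contains $v$ (with the convention $r_v = v$ if $v$ is itself a root). My goal is to show $r_v = r$, which immediately yields that $v$ lies in the arborescence of $r$.

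The key observation is that $r_v$ can reach $v$ along the unique $r_v$-to-$v$ arborescence path in $F \subseteq G$, and by hypothesis $v$ can reach $r$ in $G$; concatenating these two directed walks yields a directed path from $r_v$ to $r$ in $G$. Now suppose for contradiction that $r_v \neq r$. Then $r_v$ and $r$ are two distinct roots of $F$ with a directed path from one to the other in $G$. But \Cref{onlyroots} says exactly that such a configuration forces $F$ to be non-maximum, contradicting the assumption that $F$ is a maximum arborescence forest. Hence $r_v = r$, as desired.

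The main (and really only) obstacle is bookkeeping: one has to be slightly careful that the concatenation of the $F$-path from $r_v$ to $v$ with the $G$-path from $v$ to $r$ is indeed a directed walk (not merely a reachability statement in an undirected sense), but since both are directed and share the endpoint $v$ this is immediate. The degenerate case $v = r$ is trivial, and the case where $v$ itself is a root is already covered since then $r_v = v$ and the same argument applies. No further case analysis or structural lemmas about $G$ are needed.
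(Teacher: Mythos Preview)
Your proof is correct and follows essentially the same argument as the paper: both let the root of $v$'s arborescence be some $r'$ (your $r_v$), observe that $r'$ reaches $v$ in $F$ and $v$ reaches $r$ in $G$, and then invoke \Cref{onlyroots} to derive a contradiction unless $r' = r$. The only difference is cosmetic---you phrase it as showing $r_v = r$ directly, while the paper phrases it as assuming $v$ is not in the arborescence of $r$ and deriving a contradiction.
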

\begin{proof}
    Suppose $F$ is a maximum arborescence forest and $v$ is a vertex that can reach a root $r$, however it is not in the arborescence of $r$. Call $r'$ the root of the arborescence that contains $v$. By definition of an arborescence, there is a directed path from $r'$ to $v$ and by assumption there is a path from $v$ to $r$. By \Cref{onlyroots} we get a contradiction to the assumption that $F$ is a maximum arborescence forest, since there is a path from root $r'$ to root $r$.
\end{proof}

\begin{lemma}\label{lem:feasPathExist}
    Suppose $F$ is a maximum arborescence forest for digraph $G = (V, A)$. If $G+a$ for some arc $a \not \in A$ contains a path between two roots of $F$, then $G+a$ contains a feasible path for $F$.
\end{lemma}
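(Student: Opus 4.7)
The plan is to take any root-to-root path $P$ in $G+a$ and restructure it into a feasible one by examining where $P$ finally enters the destination arborescence. First I would observe that since $F$ is maximum in $G$, \Cref{onlyroots} rules out root-to-root paths in $G$ alone, so the path $P$ given to us by hypothesis must use the newly inserted arc $a$.

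Write $P$ as going from a root $r_1$ to a root $r_2$ of $F$, and let $T'$ denote the arborescence of $F$ rooted at $r_2$. The key move is to let $z$ be the \emph{last} vertex of $P$ that is not a member of $T'$; such a $z$ exists since $r_1$ is a root distinct from $r_2$ and so $r_1 \notin T'$, putting the first vertex of $P$ outside $T'$. Let $z'$ be the vertex of $P$ immediately after $z$, so that by the choice of $z$ the vertex $z'$ and every subsequent vertex of $P$ lies in $T'$. Let $r$ be the root of the arborescence $T$ of $F$ that contains $z$; since $z \in T \setminus T'$, we have $T \neq T'$ and hence $r \neq r_2$.

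I would then assemble the feasible path from $r$ to $r_2$ as the concatenation of three pieces: (i) the unique path from $r$ down to $z$ inside $T$, whose arcs all lie in $T$ and plays the role of $P_A$ in \Cref{dfn:feasiblePath}; (ii) the cross arc $(z,z')$, which is an arc of $P$ and hence of $G+a$; and (iii) the subpath of $P$ from $z'$ to $r_2$, whose vertices all lie in $T'$ by the choice of $z$ and plays the role of $P_V$. Because distinct arborescences of $F$ have disjoint vertex sets, $V(P_A) \cap V(P_V) = \emptyset$, so gluing a simple path inside $T$ to a simple subpath of $P$ via the single cross arc yields a simple path in $G+a$ matching the form demanded by the feasible-path definition.

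The one step that requires a little care is ensuring that the suffix of $P$ after the cross arc genuinely stays inside the vertex set of $T'$, and this is exactly what the ``last excursion outside $T'$'' choice of $z$ buys. Had I instead chosen $z$ to be the \emph{first} vertex of $P$ outside $T'$, the suffix could later re-exit $T'$ and I would need a more delicate reachability argument (for example via \Cref{lem:inCompInArb} applied to the tail of $P$); taking the last such vertex sidesteps this entirely and seems to be the only subtle design choice in the proof.
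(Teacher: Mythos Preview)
Your proposal is correct and follows essentially the same approach as the paper: both take the last vertex $z$ (the paper calls it $w$) on the given root-to-root path that lies outside the destination arborescence $T'$, replace the prefix by the in-tree path from the root of $z$'s arborescence down to $z$, and keep the suffix inside $T'$ as $P_V$. Your write-up is in fact a bit more careful than the paper's, explicitly verifying that $z$ exists, that the new starting root differs from $r_2$, and that the concatenation is simple; the initial observation that $P$ must use $a$ is true but, as in the paper, not actually needed for the construction.
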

\begin{proof}
Let $P$ be the assumed path in $G+a$ from root $r$ to root $r'$ and let $a = (u, v)$. Let $w$ be the last vertex in $P$ which is not contained in the arborescence of $r'$ and let $P_V$ be the maximal suffix of $P$ which does not contain $w$. By definition, there exists some root $r'' \neq r'$ with some arborescence $T''$ of $F$ which has some path $P_A \subseteq T''$ to $w$. Take as the feasible path $P_A \oplus P_V$.







\end{proof}

\begin{lemma}\label{lem:correctness}
Let $F^{(0)}, F^{(1)}, \ldots, F^{(m)}$ be the output of \Cref{alg:main} on graphs $G^{(0)}, G^{(1)}, \ldots, G^{(m)}$. Then $F^{(i)}$ is a maximum arborescence forest for $G^{(i)}$ for every $i$.
\end{lemma}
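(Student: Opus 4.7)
The plan is to prove the lemma by induction on $i$. The base case $i = 0$ is immediate since $F^{(0)} = (V, \emptyset)$ is vacuously a maximum arborescence forest for $G^{(0)} = (V, \emptyset)$.

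For the inductive step, I assume $F^{(i-1)}$ is a maximum arborescence forest for $G^{(i-1)}$ and split into two cases according to which branch of the algorithm triggers. In the first case, no feasible path exists in $G^{(i)}$ and the algorithm sets $F^{(i)} = F^{(i-1)}$. The contrapositive of \Cref{lem:feasPathExist} then tells us that $G^{(i)}$ contains no path between two distinct roots of $F^{(i-1)}$, and by \Cref{onlyroots} applied to $G^{(i)}$ we conclude that $F^{(i-1)}$ is already a maximum arborescence forest for $G^{(i)}$.

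In the second case, a feasible path $P$ exists and $F^{(i)} = \upd(F^{(i-1)}, P)$. \Cref{lem:updatePresForest} already establishes that $F^{(i)}$ is an arborescence forest of $G^{(i)}$, so it only remains to verify that $|F^{(i)}|$ is maximum. I would argue this from two directions. On one hand, the in-degree-preservation observation from the proof of \Cref{lem:updatePresForest} shows that $\upd$ raises exactly one in-degree (that of $r'$ from $0$ to $1$) while leaving all others unchanged, so $|F^{(i)}| = |F^{(i-1)}| + 1$; this avoids a delicate count of deleted versus re-added parent arcs, since some parent arcs of internal vertices of $P$ may themselves lie on $P$. On the other hand, any arborescence forest $F'$ of $G^{(i)}$ satisfies that $F' \setminus \{a_i\}$ is an arborescence forest of $G^{(i-1)} = G^{(i)} \setminus \{a_i\}$, so by the inductive hypothesis $|F'| \leq |F^{(i-1)}| + 1$. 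Combining the two inequalities gives $|F^{(i)}| \geq |F'|$ for every arborescence forest $F'$ of $G^{(i)}$.

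The main obstacle is the first case: the maximality of $F^{(i-1)}$ for $G^{(i-1)}$ does not transfer to $G^{(i)}$ automatically, and it is precisely \Cref{lem:feasPathExist} (together with \Cref{onlyroots}) that bridges the gap, since absence of \emph{feasible} root-to-root paths must be promoted to absence of \emph{all} root-to-root paths before we can invoke the characterization of maximality. Once that is in place, the second case is essentially a matter of combining the in-degree bookkeeping from \Cref{lem:updatePresForest} with the trivial observation that a single arc insertion can grow any arborescence forest by at most one.
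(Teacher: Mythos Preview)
Your proof is correct. The base case and Case~1 match the paper's argument essentially verbatim (both rely on the contrapositive of \Cref{lem:feasPathExist} together with \Cref{onlyroots}).

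Case~2 is where you diverge. The paper argues structurally: it shows that after the update the new forest $F^{(i)}$ has no path between two of its roots in $G^{(i)}$, via a contradiction that traces the head of $a_i$ back to the arborescence of $r'$ using \Cref{lem:inCompInArb} and the specific form of the feasible path, and then invokes \Cref{onlyroots} once more. You instead give a pure cardinality argument: $\upd$ raises the arc count by exactly one (in-degree bookkeeping), while a single arc insertion can raise the optimum by at most one (deleting $a_i$ from any competitor yields an arborescence forest of $G^{(i-1)}$). Your route is more elementary and, notably, makes no use of the \emph{feasibility} of $P$---it would establish maximality for an update along \emph{any} root-to-root path. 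This cleanly separates correctness from the particular path choice, which in the paper is really there for the recourse analysis. The paper's approach, on the other hand, directly certifies the ``no root-to-root path'' invariant for $F^{(i)}$, though by \Cref{onlyroots} this is equivalent to maximality and so yields no additional information.
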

\begin{proof}
Our proof is by induction on $i$, the above lemmas and, in particular, our careful definition of feasible paths (\Cref{dfn:feasiblePath}). $F^{(0)}$ is trivially a maximum arborescence forest for $G^{(0)}$.

Thus, consider $i > 0$. By our inductive hypothesis, we know that $F^{(i-1)}$ is a maximum arborescence forest for $G^{(i-1)}$. We case on whether adding $a_i$ created a path between roots of $F^{(i-1)}$ or not.
\begin{itemize}
    \item If there are no paths between any roots of $F^{(i-1)}$ in $G^{(i)}$ then $F^{(i)} = F^{(i-1)}$ is maximum by \Cref{onlyroots}.
    \item On the other hand, suppose that there is a path between roots $r$ and $r'$ of $F^{(i-1)}$ in $G^{(i)}$. By \Cref{lem:feasPathExist}, $G^{(i)}$ contains a feasible path $P$ of the form $P_A \oplus P_V$ where the edges of $P_A$ are contained in the arborescence of $r$ in $F^{(i-1)}$ and the vertices of $P_V$ are contained in the arborescence of $r'$ in $F^{(i-1)}$. 
    Furthermore, observe that by the definition of $\upd$ and \Cref{lem:updatePresForest}, we have that $F^{(i)}$ is an arborescence forest with roots $R^{(i)} = R^{(i-1)} - r'$ where $R^{(i-1)}$ are the roots of $F^{(i-1)}$.
    
    We claim that $F^{(i)}$ therefore has no paths between roots in $G^{(i)}$ and so $F^{(i)}$ is maximum in $G^{(i)}$ by \Cref{onlyroots}. Suppose for the sake of contradiction that such a path existed from root $r_0$ to root $r_1$ where $r_0, r_1 \in R^{(i)}$. Since $F^{(i-1)}$ is maximum in $G^{(i-1)}$, it follows by \Cref{onlyroots} that such a path must use $a_i$ (since otherwise we would have a path between two distinct roots of $F^{(i-1)}$ in $G^{(i-1)}$, contradicting \Cref{onlyroots}). Since $r' \not \in R^{(i)}$ it must be the case that $r_1 \neq r'$. However, letting $v$ be the head of $a_i$, we have that $v$ must be reachable from $r'$ in $F^{(i-1)}$ by \Cref{lem:inCompInArb} and the fact that $v$ can reach $r'$ in $F^{(i-1)}$ (by following a suffix of $P$). It follows that $r_1$ has a path to $r' \neq r_1$ in $G^{(i-1)}$, contradicting \Cref{onlyroots}. 
\end{itemize}
\end{proof}

\subsection{Recourse of Algorithm}
\label{subsection: theoretical upper bounds}
In this section we bound the expected recourse of \Cref{alg:main} as $O(m \cdot \log ^ 2 n)$. In particular, we prove the following.

\begin{lemma}
    \label{lem:recourse}
    The expected recourse of \Cref{alg:main} where arcs are inserted uniformly at random (without replacement) is $O(m \cdot \log^2 n)$.
\end{lemma}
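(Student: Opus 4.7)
The plan is to recast the uniformly random arc sequence by assigning each arc of the complete digraph on $V$ an i.i.d.\ uniform $[0,1]$ weight and revealing arcs in increasing order of weight: the revealed subgraph corresponding to any weight threshold $p$ is then distributed exactly as $D(n, p)$, so we can transfer w.h.p.\ structural facts about $D(n, p)$ either by monotonicity or by subdividing $[0,1]$ into fine intervals (each containing at most one arc w.h.p.) and union-bounding, making the property hold at every time step. Armed with this reduction, the plan is to analyze recourse in two phases: the \emph{subcritical phase}, corresponding to $p \leq 2/n$ and containing $O(n)$ insertions, and the \emph{supercritical phase}, corresponding to $p > 2/n$.

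The basic accounting is that an update along a feasible path $P = P_A \oplus P_V$ with endpoints $r, r'$ costs $|P_V| - 1$ in recourse, namely the interior vertices of $P_V$ losing their parent arcs. Summed over vertices, the event that $v$ loses its parent at step $i$ is contained in the event that $v$'s current arborescence is absorbed, which by \Cref{lem:inCompInArb} requires the inserted arc's head to lie in the in-component $I(r_v)$ of $v$'s current root; hence the per-step deletion probability for $v$ is at most $|I(r_v)|/n$, up to a negligible without-replacement correction. In the subcritical phase, standard results for $D(n, p)$ at $p \leq 2/n$ imply w.h.p.\ that every weakly connected component, and therefore every arborescence, has size $O(\log n)$. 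Since each update costs $O(\log n)$ recourse and the total number of updates is at most $\min(m, n-1)$, the subcritical contribution is $O(\min(m,n)\log n) \subseteq O(m \log^2 n)$.

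For the supercritical phase we invoke \Cref{lemma: at most log n helper}: for $p > 2/n$, every in-component of the current graph is either \emph{small} with $|I| \leq C \log n$ or \emph{giant} with $|I| \geq cn$. We then amortize per vertex $v$ using this dichotomy. When $v$'s root has a small in-component, the per-step deletion probability is $O(\log n / n)$; summed over $m$ steps and $n$ vertices this contributes $O(m \log n)$. When $v$'s root has a giant in-component, by \Cref{lem:inCompInArb} that arborescence has size $\geq c n$, so at most $O(1)$ such arborescences coexist at any time, and by \Cref{lem:presInComp} the destination root's in-component is preserved under merges. These two facts together force $v$ to participate in at most $O(\log n)$ absorptions involving a giant-in-component arborescence before its arborescence would exceed size $n$, since each such absorption strictly enlarges the in-component $v$ belongs to. Charging each such absorption $O(\log n)$ recourse, via a short-$P_V$ guarantee from a $D(n,p)$ diameter bound inside the giant strongly connected component, yields a total supercritical recourse of $O(m \log^2 n)$.

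The main obstacle is the giant-in-component regime: the naive bound $\Pr[v\text{ loses parent at step } i] \leq |I(r_v)|/n$ becomes $\Theta(1)$ per step once $|I(r_v)|$ is giant, which aggregates to a useless $\Omega(mn)$. A tight amortization must therefore simultaneously exploit the in-component dichotomy (ruling out intermediate sizes), the preservation lemma (so $v$ eventually stabilizes into a small-in-component arborescence), and a short-$P_V$ guarantee inside giant strongly connected components, while carefully handling the without-replacement nature of the input model and the bookkeeping of how often a vertex can cycle through the few giant-in-component arborescences.
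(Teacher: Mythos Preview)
Your overall two-phase decomposition and the coupling to $D(n,p)$ via i.i.d.\ $[0,1]$ labels match the paper. However, there is a genuine gap in your subcritical argument, and your supercritical giant-case argument is muddled.

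\textbf{Subcritical phase.} Your claim that for $p \le 2/n$ ``every weakly connected component, and therefore every arborescence, has size $O(\log n)$'' is false. The underlying undirected graph of $D(n,2/n)$ is $G(n,q)$ with $q = 1-(1-2/n)^2 \approx 4/n$, which is well into the supercritical regime for Erd\H{o}s--R\'enyi and hence has a linear-size giant component w.h.p.; in fact $D(n,c/n)$ already has a giant \emph{strongly} connected component once $c>1$. So arborescences in this phase can have size $\Theta(n)$, and your ``$O(\log n)$ per update'' bound does not hold. The paper's subcritical argument is entirely different: it observes that throughout $p\le 2/n$ there are $\Omega(n)$ isolated vertices, so each time $v$'s arborescence is absorbed the tail of the feasible path is an isolated vertex with constant probability; by \Cref{lem:presInComp} $v$'s new root then has in-component of size~$1$, and with constant probability no further arc lands there during the phase. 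A geometric-trials argument then gives $O(\log n)$ parent deletions per vertex w.h.p., hence $O(n\log n)$ total.

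\textbf{Supercritical giant case.} Your reasoning that ``each such absorption strictly enlarges the in-component $v$ belongs to'' is not justified: \Cref{lem:presInComp} says the \emph{surviving} root $r$ keeps its in-component unchanged, but that in-component can be smaller than the in-component of $v$'s old root $r'$. The paper's argument is simpler and does not need any diameter or ``short-$P_V$'' bound: if $v$ is merged into an in-large arborescence (root in-component $\ge \alpha n$), then by \Cref{lem:inCompInArb} that arborescence already has $\ge \alpha n$ vertices, so $v$'s arborescence grows by at least $\alpha n$ at each such merge and there can be at most $O(1/\alpha)=O(1)$ of them. Each contributes $1$ to $v$'s parent-deletion count in the per-vertex accounting, so the giant case contributes $O(n)$ total, not $O(n\log^2 n)$. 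Your invocation of a diameter bound to make $|P_V|=O(\log n)$ is both unnecessary and unclear (the vertices of $P_V$ lie in the in-component of $r'$, not in a single strongly connected component whose diameter you could control).
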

\noindent For simplicity of presentation, we will assume in our analysis that $n \log n$ arcs are inserted; it is clear from the analysis that terminating the analysis with $m < n \log n$ still yields the same recourse.

We will make use of the following notion of a component. 
Below, $A(S)$ gives all arcs with both endpoints internal to $S\subseteq V$.

The main fact we will use when bounding recourse is that by our careful choice of paths along which we update---namely feasible paths (\Cref{dfn:feasiblePath})---we only pay recourse when we add an arc incident to a root's in-component and when we do so we pay at most the size of that root's arborescence. 


\begin{lemma}\label{lem:recHelper}
    Suppose \Cref{alg:main} performs an update in iteration $i$, namely $F^{(i)} = \upd(F^{(i-1)}, P)$ where $P$ is a feasible path from root $r$ of $F^{(i-1)}$ to root $r'$ of $F^{(i-1)}$. Let $a_i = (u,v)$ and let $T'$ be the arborescence of $r'$ in $F^{(i-1)}$. Then $v$ is in the in-component of $r'$ in $G^{(i-1)}$ and our recourse is at most
    \begin{align*}
        |F^{(i-1)} \setminus F^{(i)}| \leq  |T'|.
    \end{align*}
\end{lemma}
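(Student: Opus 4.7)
The plan is to prove the two assertions of the lemma separately, both by exploiting the structure of a feasible path $P = P_A \oplus P_V$ (\Cref{dfn:feasiblePath}) together with the fact that $F^{(i-1)}$ is maximum in $G^{(i-1)}$ by \Cref{lem:correctness}.

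For the first claim, I would argue that $P$ must traverse the newly inserted arc $a_i = (u,v)$: since $F^{(i-1)}$ is maximum in $G^{(i-1)}$, by \Cref{onlyroots} there is no path between distinct roots of $F^{(i-1)}$ contained entirely in $G^{(i-1)}$, so the path $P$ (which is such a path in $G^{(i)}$) is forced to use $a_i$. Because $P$ is a simple path, the sub-path strictly after $a_i$ lies entirely in $G^{(i-1)}$ and witnesses a directed path from $v$ to $r'$; the case $v = r'$ is immediate. This places $v$ in the in-component of $r'$ within $G^{(i-1)}$.

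For the recourse bound, I would first rewrite
\begin{align*}
    F^{(i-1)} \setminus F^{(i)} = \{\parnt(v_j) : v_j \text{ internal to } P\} \setminus P,
\end{align*}
so that it suffices to count internal vertices of $P$ whose parent arc in $F^{(i-1)}$ is not itself an arc of $P$. Writing $P = (r, \ldots, w, y_1, \ldots, r')$ via the decomposition from \Cref{lem:feasPathExist}, where $w$ is the last vertex of $P$ outside $T'$, $P_A$ is the prefix ending at $w$ whose arcs lie in the arborescence $T$ of $r$, and $P_V = (y_1, \ldots, r')$ has all vertices in $T'$, every internal vertex of $P_A$ (and $w$ itself) has its unique $T$-in-arc equal to the corresponding arc of $P_A$ by uniqueness of in-arcs in an arborescence, so $\parnt(v_j) \in P_A \subseteq P$ and such vertices contribute $0$ to recourse. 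The remaining internal vertices are precisely those of $P_V$ other than $r'$, each a distinct vertex of $T'$, and there are strictly fewer than $|T'|$ of them.

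The main obstacle is verifying the junction between $P_A$ and $P_V$ carefully: I have to confirm that the arc of $P_A$ into $w$ really is $w$'s parent arc in $F^{(i-1)}$ (so $w$ does not count toward recourse), and that the cross arc $(w, y_1)$, which has tail $w \notin T'$, cannot coincide with $\parnt(y_1) \in T'$ (so $y_1$ genuinely falls into the $P_V$ bucket). Beyond this bookkeeping, the argument reduces to a vertex-by-vertex comparison between arcs of $P$ and tree arcs of $T$ or $T'$.
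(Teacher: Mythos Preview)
Your proof is correct and follows the same approach as the paper, which simply declares both claims ``trivial'': the paper notes in one line that $v$ reaching $r'$ in $G^{(i-1)}$ is immediate, and that the only deleted arcs are parent arcs of vertices in $T'$. Your write-up just unpacks these observations more carefully (using \Cref{onlyroots} to force $a_i\in P$ and the $P_A\oplus P_V$ structure to see that $P_A$-internal vertices contribute nothing), so there is no substantive difference in strategy.
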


\begin{proof}
    The fact that $v$ is in the in-component of $r'$ in $G^{(i-1)}$ is trivial.


    To bound the recourse in terms of $|T'|$, observe that the only arcs we delete are parent arcs of vertices in $T'$ which trivially gives $|F^{(i-1)} \setminus F^{(i)}| \leq |T'|$.
\end{proof}

We will also make use of the following lemma which will allow us to argue that after a root $r$ merges with the arborescence of another root $r'$, the in-component of the resulting arborescence is just that of $r$.
\begin{lemma}\label{lem:presInComp}
    Suppose \Cref{alg:main} performs a merge in the $i$th iteration; that is, there is some feasible path from root $r$ to root $r'$ of $F^{(i-1)}$ in $G^{(i)}$. Then the in-component of $r$ is the same in $G^{(i-1)}$ and $G^{(i)}$.
\end{lemma}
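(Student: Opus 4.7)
The plan is to argue by contradiction. Since $G^{(i)} = G^{(i-1)} + a_i$, the in-component of $r$ can only grow between rounds, so it suffices to show nothing new can reach $r$. I would assume for contradiction that some vertex $w$ reaches $r$ in $G^{(i)}$ but not in $G^{(i-1)}$, then localize where the new arc $a_i = (u,v)$ must sit and derive a contradiction from the structure of feasible paths.

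First I would observe that any simple path from $w$ to $r$ in $G^{(i)}$ that does not exist in $G^{(i-1)}$ must traverse $a_i$ exactly once, so in particular $v$ reaches $r$ in $G^{(i-1)}$. By \Cref{lem:correctness}, $F^{(i-1)}$ is a maximum arborescence forest of $G^{(i-1)}$, hence \Cref{lem:inCompInArb} applies and forces $v$ to lie in the arborescence $T$ of $r$ in $F^{(i-1)}$.

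Next I would use the decomposition of the feasible path $P = P_A \oplus P_V$ supplied by the hypothesis. The arcs of $P_A$ all lie in $T$, hence in $F^{(i-1)}$, so $a_i$ cannot be one of them. Therefore $a_i$ must appear either within $P_V$ or as the arc that joins $P_A$ to $P_V$; in both cases its head $v$ is a vertex of $P_V$, which by the definition of feasible paths is contained in the arborescence $T'$ of $r'$. This places $v$ simultaneously in $T$ and in $T'$, contradicting the fact that distinct arborescences of $F^{(i-1)}$ are vertex-disjoint. The contradiction shows no such $w$ exists and the in-components coincide.

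The main obstacle I anticipate is being precise about where the new arc $a_i$ sits in the feasible path, since the decomposition $P_A \oplus P_V$ treats arcs and vertices asymmetrically (the arcs of $P_A$ must be in $T$ while only the vertices of $P_V$ need to be in $T'$). Once one carefully notes that $a_i \notin F^{(i-1)} \supseteq P_A$, the only possibilities for $a_i$ place its head $v$ inside $T'$, and the rest is immediate.
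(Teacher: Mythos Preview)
Your argument is correct and follows the paper's overall shape: both proceed by contradiction, deduce that the head $v$ of $a_i$ reaches $r$ in $G^{(i-1)}$, and invoke \Cref{lem:inCompInArb} to place $v$ in the arborescence $T$ of $r$. The only divergence is the final contradiction: you use the feasible-path decomposition to also place $v$ in $T'$ and contradict vertex-disjointness of $T$ and $T'$, whereas the paper simply notes that $v$, being on a root-to-root path after the arc $a_i$, already reaches $r'$ in $G^{(i-1)}$, so $r$ reaches $r'$ there, contradicting \Cref{onlyroots}. The paper's version therefore never actually uses the feasible-path structure, only that \emph{some} $r\to r'$ path exists in $G^{(i)}$.

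One step you leave implicit and should state explicitly is that $a_i$ lies on $P$ at all: this follows because $F^{(i-1)}$ is maximum in $G^{(i-1)}$ by \Cref{lem:correctness}, so by \Cref{onlyroots} no $r\to r'$ path exists in $G^{(i-1)}$ and hence $P$ must use $a_i$. With that line added, your proof is complete.
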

\begin{proof}
        In particular, let $G = G^{(i-1)}$ be the graph at this point and let $a = (u,v) = a_i$ be the randomly inserted arc so that $r$ has a path to $r'$ in $G+a$ using $a$. Now suppose for the sake of contradiction that the in-component of $r$ is strictly larger in $G+a$. It follows that there is some $w$ which was not in the in-component of $r$ which now is and this, in turn, can only happen if $w$ has a path to $r$ using $a$. However, it then follows that $v$ has a path to $r$ in $G$ which by \Cref{lem:inCompInArb} means that $v$ is in the arborescence of $r$ (before updating). But, since $v$ can reach $r'$ without using $a$ in $G$ then so too can $r$ and so we have a path from $r$ to $r'$ in $G$ which contradicts \Cref{onlyroots} and \Cref{lem:correctness}.
\end{proof}


Our general approach will be to break our analysis into the first, roughly, $2n$ arcs and then all remaining arcs and appeal to \Cref{lem:recHelper}. More specifically, we consider the following reformulation of uniformly random arc arrivals. Assign to each directed pair of vertices $(u,v) \in V \times V$ a value $\rho_{uv}$ drawn independently and uniformly at random from $[0,1]$. Let $a_1, a_2, a_3, \ldots, a_m$ be the first $m$ such pairs after we sort by $\rho$ values. Throughout the rest of this section we will refer to $\rho_a$ as the value of arc $a$. It is easy to verify that these first $m$ arcs are distributed identically to the $m$ arcs sampled by our uniform random arc arrival model.

What do we gain from such a reformulation? Observe that if we set a fixed threshold $p \in [0,1]$ and define $G_{p}$ to consist of all arcs $a$ which satisfy $\rho_a \leq p$ we have an instance of $D(n,p)$, namely the random digraph on $n$ nodes where each arc appears independently at random with probability $p$. There is a wealth of literature on $D(n,p)$ and so by appropriately setting this threshold $p$ we can hope to leverage some of these results to help bound our recourse. In what follows, we say that the recourse contributed by arc $a_i$ is just $F^{(i-1)} \setminus F^{(i)}$.

\subsubsection{Arcs with Value under 2/n}

We start by bounding the recourse on, roughly, the first $2n$ arcs.

By \Cref{lem:recHelper}, the only arborescences that are a problem are those with super-constant size. We prove this theorem by arguing that all arborescences of non-constant size will get merged at most $\log n$ times (in the regime to which this theorem belongs, i.e. during arc insertion of arcs with value under $2/n$). We do this by showing that after $O(\log n )$ merges it is expected that the root of an arborescence is a vertex whose strong component is a single vertex (i.e.\ the vertex itself). The following result summarizes this.

\begin{restatable}{lemma}{onlogn}\label{theorem: O(nlog(n))}
    All arcs with a value under $2/n$ contribute a total recourse of $O(n \log n)$ with high probability.
\end{restatable}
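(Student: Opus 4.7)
The plan has three steps.

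First, I would bound the number of iterations in this phase (arcs with value at most $2/n$) by $O(n)$ with high probability via a Chernoff bound, since this count is distributed as the edge count of $D(n,2/n)$ and has mean about $2n$.

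Second, I would rewrite the recourse using \Cref{lem:recHelper}: at each iteration where a merge occurs the recourse equals the size of the absorbed arborescence, and since an absorbed root never becomes a root again, the total recourse in this phase equals $\sum_v k_v$, where $k_v$ counts the number of iterations of this phase at which the arborescence containing $v$ is absorbed.

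Third, the heart of the argument is to show $\mathbb{E}[k_v] = O(\log n)$ for every vertex $v$. I would track the sequence of roots $r_0(v) = v, r_1(v), r_2(v), \ldots$ that own $v$'s arborescence over time. By \Cref{lem:presInComp}, when $r_j(v)$ replaces $r_{j-1}(v)$ the in-component of $r_j(v)$ is unchanged by the merge and so is just its natural in-component in the current random graph. Using the structural property of $D(n,p)$ for $p \leq 2/n$ that a uniformly random vertex has a singleton strong component (and hence an $O(\log n)$ in-component when it lies outside the giant strong component) with constant probability, I would argue inductively that with constant probability each merge of $v$'s arborescence produces a ``good'' root, where good means singleton strong component and $O(\log n)$ in-component. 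Iterating this geometric-tail argument, after $O(\log n)$ merges $r_{O(\log n)}(v)$ is good with high probability; once it is, the probability that any subsequent random arc triggers another merge of $v$'s arborescence is at most $|\text{InComp}(r_j(v))|/n = O((\log n)/n)$, so across the $O(n)$ remaining arcs of the phase only $O(\log n)$ further merges of $v$'s arborescence occur in expectation. Combining yields $\mathbb{E}[k_v] = O(\log n)$, hence $\mathbb{E}\bigl[\sum_v k_v\bigr] = O(n \log n)$, and a standard Markov/Chernoff argument upgrades this expectation bound to a high-probability statement.

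The main obstacle will be the structural step: proving that, conditioned on $v$'s history, each merge produces a good root with constant probability. I expect this to follow by coupling $G^{(i)}$ with $D(n,p)$ via the threshold reformulation of uniform arc arrivals described just above the lemma, then exploiting the uniform randomness of the triggering arc so that, conditioned on the configuration at the time of the merge, the new root $r_j(v)$ is drawn from a pool dominated by vertices with singleton strong component and small in-component in $D(n,2/n)$. Making this coupling precise---and handling the bookkeeping so that conditioning on the past does not destroy the fresh randomness needed for the next merge---is the delicate technical step.
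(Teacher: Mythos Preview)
Your overall framework---rewriting the phase recourse as $\sum_v k_v$ where $k_v$ counts how many times the arborescence containing $v$ is absorbed, and bounding each $k_v$ by $O(\log n)$---is exactly what the paper does. The divergence is entirely in how to bound $k_v$, and here the paper uses a much simpler structural input than the one you propose. Rather than reasoning about strong components and in-component sizes in $D(n,2/n)$, the paper observes that with high probability $G_{2/n}$ has $\Omega(n)$ \emph{isolated} vertices (no incident arcs at all). Conditioned on a merge of $v$'s arborescence being triggered by arc $(u,w)$, the tail $u$ is essentially uniform over vertices outside the arborescence of $r'$, so with constant probability $u$ is isolated; in that case the new root is $u$ itself and its in-component is the singleton $\{u\}$. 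Then, since only $O(n)$ arcs remain in the phase and each has head $u$ with probability $O(1/n)$, with a further constant probability no remaining arc ever lands on $u$, so this merge is the \emph{last} one for $v$ in the phase. This gives a clean geometric tail on $k_v$ directly.

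Your route has two concrete obstacles that the isolated-vertex trick sidesteps. First, the new root $r_j(v)$ is the root of the arborescence containing the tail $u$, and this is \emph{not} close to a uniform vertex: roots of large arborescences are heavily favored, so you cannot invoke ``a uniform vertex has singleton strong component.'' (With isolated $u$ you know $r_j(v)=u$ exactly, bypassing this bias.) Second, ``singleton strong component'' does not imply ``$O(\log n)$ in-component'' at $p=2/n$; a vertex may lie in no nontrivial strong component yet have the giant inside its in-component. Even if you patched this, your final step has a gap: once $r_j(v)$ is good and a merge occurs, the next root need not be good, so the bound $O((\log n)/n)$ per arc does not persist and the conclusion ``$O(\log n)$ further merges'' does not follow. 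The paper's argument avoids this because its favorable event (isolated new root and no future arc into it) is terminal, not merely transient.
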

\begin{proof}
    Let $G_p$ be the graph consisting of all arcs whose value is at most $p$. Notice that, by standard arguments, the number of isolated vertices\footnote{An isolated vertex is one with no incident arcs.} in $G_{2/n}$ is $\Omega(n)$ with high probability (for a suitable small hidden constant). The number of isolated vertices is also clearly monotonically decreasing as more arcs are added and so it follows that $G^{(i)}$ consists of at least $\Omega(n)$ isolated vertices for every $i$ such that $a^{(i)}$ has value at most $2/n$. Each such isolated vertex is clearly a root in any arborescence forest.

    We will consider a vertex $v$ and analyze the number of times this vertex must delete its parent arc. By \Cref{lem:recHelper}, $v$ must only delete its parent arc when the randomly inserted arc is incident to the in-component of the root of the arborescence containing $v$. However, notice that by our assumption on the number of isolated vertices being $\Omega(n)$, we have that such an arc originates in an isolated vertex with constant probability. It follows by \Cref{lem:presInComp} that each time $v$ deletes its parent arc, with constant probability it is afterwards in an arborescence with a root whose in-component is just that root itself. Then, with constant probability for the rest of the phase no arcs will ever be incident to the in-component (and therefore root) of $v$. Thus, we have with high probability that $v$ deletes its parent arc at most $O(\log n)$-many times. It follows that the total recourse across all vertices over this entire phase is $O(n \log n)$ with high probability.
    \end{proof}

\subsubsection{Arcs with Value Over 2/n}
Having bounded the recourse on the first (approximately) $2n$ arcs, we now bound the recourse on all subsequent arcs. Namely, we show that each arc insertion, where the arc has value at least $c/n$ for constant $c > 1$, gives recourse $O(\log^2 n)$ on average. Note that this regime overlaps with the previous regime but the previous regime for $c > 1$ always contains arcs whose values are not in this regime and so the bounds of the previous section are indeed necessary.

We will make use of a well-known fact showing that $D(n,p)$ is strongly connected whenever $p$ is sufficiently large. In particular, this means that we only have to reason about recourse until this strong connectivity threshold is met (at which point the entire graph is spanned by a single arborescence).


\begin{lemma}[\cite{frieze_karoński_2015}, Theorem 13.9]\label{lem:strongConn}
Let $p = 2\log n / n$. Then $D(n,p)$ is strongly connected with high probability.
\end{lemma}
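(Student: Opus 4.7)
The plan is to bound the probability that $D(n,p)$ fails to be strongly connected by a union bound over ``bad cuts.'' The key structural observation is that $D$ is not strongly connected if and only if there exists a non-empty proper subset $S \subsetneq V$ such that no arc of $D$ goes from $V \setminus S$ into $S$: indeed, if $D$ has multiple strongly connected components, taking $S$ to be the union of the source components of the condensation (i.e., the DAG of SCCs) gives such a set. For a fixed $S$ of size $k$ the probability that none of the $k(n-k)$ potential arcs from $V \setminus S$ into $S$ is present equals $(1-p)^{k(n-k)}$, so a union bound yields
\begin{align*}
    \Pr[D(n,p) \text{ not strongly connected}] \;\leq\; \sum_{k=1}^{n-1} \binom{n}{k}(1-p)^{k(n-k)}.
\end{align*}

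I would then show this sum is $o(1)$ by splitting into three regimes. For $k=1$ (and, symmetrically, $k=n-1$), the term is at most $n \cdot e^{-p(n-1)} = n^{-1+o(1)}$, since $p(n-1) = (2-o(1))\log n$. For ``small $k$,'' say $2 \leq k \leq n^{1/3}$, I would use $\binom{n}{k} \leq (en/k)^k$ together with the refined estimate $p(n-k) = (2-o(1))\log n$ (valid because $n-k$ is nearly $n$ in this regime) to bound each summand by $\exp(k[1 + \log(n/k) - (2-o(1))\log n]) \leq n^{-(1-o(1))k}$; the resulting geometric-type series contributes $O(n^{-2})$. For ``large $k$,'' $n^{1/3} \leq k \leq n/2$, the crude bounds $\binom{n}{k} \leq 2^n$ and $(1-p)^{k(n-k)} \leq e^{-pk(n-k)} \leq n^{-k}$ (using $n-k \geq n/2$) give per-summand contribution at most $2^n \cdot n^{-n^{1/3}}$, which is super-polynomially small and negligible when summed. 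The range $n/2 < k \leq n-2$ is handled by the symmetry $k \leftrightarrow n-k$. Altogether, the probability of failure is $O(1/n) = o(1)$, as required.

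The main obstacle I anticipate is that the naive bound $p(n-k) \geq \log n$ coming from just $n-k \geq n/2$ yields only $(e/k)^k$ per summand, which does not sum to $o(1)$ over $k$. The remedy is to distinguish small $k$ (where $n-k$ is essentially $n$, giving the sharper exponent $(2-o(1))\log n$) from moderate/large $k$ (where $pk(n-k)$ is so large that even the crude binomial bound $2^n$ is swamped). Handling this split carefully---rather than mechanically using $n-k \geq n/2$ throughout---is the only subtle step; everything else is routine.
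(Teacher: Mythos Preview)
The paper does not give its own proof of this lemma; it simply cites Theorem~13.9 of Frieze--Karo\'nski and uses the statement as a black box. So there is nothing to compare on the paper's side, and your proposal is supplying an argument where the paper supplies none.

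That said, your argument has a genuine gap in the ``large $k$'' regime. You claim that for $n^{1/3}\le k\le n/2$ the per-summand bound $2^n\cdot n^{-n^{1/3}}$ is ``super-polynomially small,'' but it is not: taking logarithms gives $n\ln 2 - n^{1/3}\ln n$, and for large $n$ the term $n\ln 2$ dominates, so $2^n\cdot n^{-n^{1/3}}\to\infty$. The crude bound $\binom{n}{k}\le 2^n$ is simply too weak here. The easy repair is to keep using $\binom{n}{k}\le (en/k)^k$ in this range as well: for $n^{1/3}\le k\le n/2$ one has $n/k\le n^{2/3}$ and $p(n-k)\ge \log n$, so each summand is at most
\[
(en/k)^k\,e^{-pk(n-k)} \;\le\; \bigl(en^{2/3}\bigr)^k\cdot n^{-k} \;=\; \bigl(e\,n^{-1/3}\bigr)^k \;\le\; \bigl(e\,n^{-1/3}\bigr)^{n^{1/3}},
\]
which really is super-polynomially small. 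With this correction (and your small-$k$ analysis, which is fine), the union-bound argument goes through and gives failure probability $O(1/n)$.
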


Next, we adapt a known technique to bound the size of components in $D(n,p)$ by using arborescences as witnesses. See \cite{https://doi.org/10.1002/rsa.3240010106} and also Lemma 13.3 of \cite{frieze_karoński_2015}.
\begin{restatable}{lemma}{atMostLogHelp}\label{lemma: at most log n helper}
    Let $p = c/n$ for any $c > 1$. Then with
    high probability neither the in-component nor out-component of any vertex in $D(n,p)$ has size $s$ for $s \in [10 \log n, \beta n]$ where $\beta = 1 - \frac{1.1}{c} - \frac{\ln c}{c}$.
\end{restatable}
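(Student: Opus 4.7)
The plan is to use a first-moment argument with spanning arborescences as witnesses, very much in the spirit of the standard $D(n,p)$ component-size analysis referenced in the lemma statement. Fix $s$ in the target range and a specific root vertex $v$. I would bound the probability that $v$'s out-component equals a particular size-$s$ subset $S$ containing $v$ by the product of two independent events: (i) $D(n,p)$ restricted to $S$ contains a spanning out-arborescence rooted at $v$, which by a union bound over the $s^{s-2}$ labeled trees on $S$ (each uniquely oriented as an out-arborescence from $v$) has probability at most $s^{s-2} p^{s-1}$, and (ii) every one of the $s(n-s)$ arcs leaving $S$ is absent, which contributes $(1-p)^{s(n-s)}$. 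Summing over the $\binom{n-1}{s-1}$ choices of $S$ and taking a union bound over the $n$ choices of $v$ yields
\begin{align*}
\Pr\bigl[\exists v : |C^+(v)|=s\bigr] \;\leq\; n \binom{n-1}{s-1} s^{s-2} p^{s-1} (1-p)^{s(n-s)}.
\end{align*}

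Next, I would simplify this bound using $\binom{n-1}{s-1} \leq (en/s)^s$, $s^{s-2} \leq s^s$, $(1-p)^{s(n-s)} \leq e^{-ps(n-s)}$, and $p=c/n$. Writing $\alpha = s/n$, the logarithm of the resulting expression becomes, up to an additive $O(\log n)$,
\begin{align*}
s\bigl(1 + \ln c - c + c\alpha\bigr) + O(\log n).
\end{align*}
The crucial observation is that the function $h(\alpha) = 1 - c + \ln c + c\alpha$ is increasing in $\alpha$, vanishes precisely at $\alpha = 1 - 1/c - \ln c / c$, and the chosen threshold $\beta = 1 - 1.1/c - \ln c/c$ is strictly smaller, giving $h(\beta) = -0.1$. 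Hence for every $\alpha \leq \beta$ we have $h(\alpha) \leq -0.1$, so the exponent is at most $-\Omega(s) + O(\log n)$.

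For $s \geq 10 \log n$ (absorbing the $O(\log n)$ slack by taking the hidden constant in $10$ large enough to dominate the polynomial $n^2$ factors from the two union bounds), each term becomes $n^{-\omega(1)}$. Summing the geometric tail over $s \in [10\log n, \beta n]$ keeps the total probability at $n^{-\omega(1)}$. The in-component case is handled identically by reversing arc directions: an in-arborescence toward $v$ on $S$ plays the role of the witness, and the boundary condition becomes the absence of all $s(n-s)$ arcs from $V \setminus S$ into $S$, yielding the same estimate. A final union bound over in- and out-components and over the (at most $n$) values of $s$ in the range completes the proof.

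The main obstacle is bookkeeping the constants: the extra $1.1/c$ (rather than $1/c$) in the definition of $\beta$ must be tuned precisely to yield a negative linear-in-$s$ exponent that dominates the polynomial union-bound factors of $n^2$, and the lower endpoint $10 \log n$ must be large enough that $e^{-\Omega(s)}$ beats those polynomial factors. Once these constants line up, everything else is routine.
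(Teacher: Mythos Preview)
Your proposal is correct and follows essentially the same approach as the paper: both use a first-moment argument with out-arborescences as witnesses, Cayley's count $s^{s-2}$ for labeled trees, the same simplifications $\binom{n}{s}\le (ne/s)^s$ and $1-x\le e^{-x}$, and the key observation that the exponent $1-c+\ln c+cs/n$ is at most $-0.1$ whenever $s\le\beta n$ (indeed your bound $n\binom{n-1}{s-1}s^{s-2}p^{s-1}(1-p)^{s(n-s)}$ is literally equal to the paper's $s\binom{n}{s}s^{s-2}p^{s-1}(1-p)^{s(n-s)}$ via $n\binom{n-1}{s-1}=s\binom{n}{s}$). Your explicit remark that the constants $10$ and $1.1$ must be tuned so that $e^{-0.1s}$ beats the $\poly(n)$ union-bound factors is exactly the bookkeeping the paper leaves implicit.
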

\begin{proof}
    Consider a vertex $v$. Note that $v$'s out-component has size $s$, only if it is the root of (at least one) out arborescence $T$\footnote{This does not refer to an arborescences from the arborescence forest but just some arbitrary arborescence rooted at $v$.} of size $s$ with no arcs going from $T$ to $V\setminus T$. 

    The expected number of such arborescences is at most
    \begin{align*}
        s \binom{n}{s}s^{s-2} \left( \frac{c}{n} \right)^{s-1} \left( 1- \frac{c}{n}\right)^{s(n-s)}
    \end{align*}
    Applying $\binom{n}{s} \leq (\frac{ne}{s})^s$ and $1-x \leq e^{-x}$ and rearranging we have that this is at most
    \begin{align*}
        \frac{n}{sc} \left(e^{1 - c + cs/n + \ln c} \right)^s.
    \end{align*}
    Using the fact that $s \leq \beta n$ for $\beta = 1 - \frac{1.1}{c} - \frac{\ln c}{c}$ we get that the above is at most 
    \begin{align*}
        \frac{n}{sc} \left(e^{-.1s} \right).
    \end{align*}
    Lastly, applying the fact that $s \geq 10 \log n$ we have that the above is at most $1/\poly(n)$.
    The result therefore follows by the first moment method and a union bound over all vertices. The bound for the in-component is symmetric.
    \end{proof}


The above results apply to a single draw of $D(n,p)$, but we would like it to apply to all of our relevant $G^{(i)}$. We can achieve this by subdividing our value interval and appropriately union bounding over these intervals. Namely, we show the following.

\begin{restatable}{lemma}{atMostLogn}\label{lemma: at most log n}
    With high probability for any $G^{(i)}$ where $a_i$ has a value larger than $1.6/n$ we have that $G^{(i)}$ contains no vertices with an in-component of size in $[10 \log n, \alpha n]$ for some constant $\alpha > 0$.
\end{restatable}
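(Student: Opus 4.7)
The plan is to reduce the claim about every $G^{(i)}$ with $\rho_{a_i} > 1.6/n$ to a claim about polynomially many checkpoints, each distributed as $D(n,p)$, to which \Cref{lemma: at most log n helper} applies. The property ``no in-component of size in $[10 \log n, \alpha n]$'' is \emph{not} monotone under arc insertions, so one cannot apply \Cref{lemma: at most log n helper} at a single extreme $p$ and invoke monotonicity; the fine-grained subdivision below sidesteps this.

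First, I would subdivide $(1.6/n, 1]$ into $M = n^C$ subintervals of equal length for a sufficiently large constant $C$, with right endpoints $p_1 < p_2 < \cdots < p_M$. Since there are at most $n^2$ potential arcs and each has $\rho$-value drawn independently and uniformly from $[0,1]$, the expected number of arcs with $\rho$-value in a fixed subinterval is at most $n^{2-C}$; a direct second moment / union bound over all $M$ subintervals then shows that with high probability every subinterval contains at most one arc.

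Next, observe that for each checkpoint $p_k$ the graph $G_{p_k}$ (all arcs with $\rho$-value $\leq p_k$) is distributed as $D(n, p_k)$ and satisfies $c_k := p_k n \geq 1.6$. Applying \Cref{lemma: at most log n helper} with $c = c_k$ gives that with probability $1 - 1/\poly(n)$, no in-component of $G_{p_k}$ has size in $[10 \log n, \beta(c_k) n]$, where $\beta(c) = 1 - 1.1/c - \ln c / c$. A short calculation yields $\beta'(c) = (0.1 + \ln c)/c^2 > 0$ for $c \geq 1$, so $\beta$ is strictly increasing on $[1.6, \infty)$. Setting $\alpha := \beta(1.6) > 0$ gives a universal positive constant with $\beta(c_k) \geq \alpha$ for every $k$. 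Union-bounding over all $M$ checkpoints, with high probability every $G_{p_k}$ has no in-component of size in $[10 \log n, \alpha n]$.

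Finally, for any $i$ with $\rho_{a_i} > 1.6/n$, let $I_k = (p_{k-1}, p_k]$ be the subinterval containing $\rho_{a_i}$. Under the ``at most one arc per subinterval'' event, $a_i$ is the unique arc whose $\rho$-value lies in $I_k$, so the set of arcs with $\rho$-value $\leq \rho_{a_i}$ coincides with the set of arcs with $\rho$-value $\leq p_k$, i.e., $G^{(i)} = G_{\rho_{a_i}} = G_{p_k}$. Hence $G^{(i)}$ inherits the desired property from the checkpoint. The main technical obstacle is just calibrating $C$ large enough that both the ``one arc per subinterval'' event and the ``no bad in-component at each checkpoint'' event survive a single union bound over $M = n^C$ subintervals; this is routine given \Cref{lemma: at most log n helper}.
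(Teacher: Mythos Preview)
Your proposal is correct and follows essentially the same approach as the paper: subdivide $(1.6/n,1]$ into polynomially many subintervals, argue that with high probability each subinterval contains at most one arc so that every relevant $G^{(i)}$ coincides with some checkpoint graph $G_{p_k}\sim D(n,p_k)$, apply \Cref{lemma: at most log n helper} at each checkpoint, and union bound. Your treatment of the constant $\alpha$ is in fact cleaner than the paper's---you explicitly verify that $\beta(c)=1-1.1/c-\ln c/c$ is increasing on $[1.6,\infty)$ and set $\alpha=\beta(1.6)>0$, whereas the paper's displayed inequality for $\beta$ has the direction reversed.
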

\begin{proof}
    Let $G_p$ be the random digraph gotten by taking all arcs whose value is at most $p$.

    We subdivide the interval $[1.6/n,1]$ into $n^{10}$ equally-sized contiguous intervals. We can loosely upper bound the probability that a given interval has at least two arcs in it as
    \begin{align*}
        \frac{1}{n^{10}} \cdot \frac{1}{n^{10}} \cdot \binom{n^2}{2} \leq O\left(\frac{1}{n^{16}} \right).
    \end{align*}
    By a union bound over all intervals we have that with high probability no interval has more than one arc in it.
    
    
    

    Next, by \Cref{lemma: at most log n helper} we have that with high probability the graph corresponding to the interval which corresponds to $p = c/n$ for $c \geq 1.6$ has no in-component of size in $[10 \log n, \beta n]$ where $\beta = 1 - \frac{1.1}{c} - \frac{\ln c}{c} \leq 1 - \frac{1.1}{1.6} - \frac{\ln 1.6}{1.6} \leq \alpha$  for some small $\alpha > 0$ independent of $c$. Union bounding over all such intervals gives our result.
    \end{proof}

\begin{restatable}{lemma}{onlogsquaredn}\label{theorem: supercritical regime recourse}
    The expected total recourse from arcs with value larger than $1.6/n$ is $O(n \log^2 n)$.
\end{restatable}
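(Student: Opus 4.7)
My plan is to condition on the high-probability event of \Cref{lemma: at most log n}: throughout the regime $\rho_{a_i} > 1.6/n$, every in-component in every $G^{(i)}$ has size at most $10\log n$ or at least $\alpha n$. The recourse of each merge at iteration $i$ is at most $|I_{r'}^{(i-1)}|$: this follows from \Cref{lem:recHelper} together with the observation that every internal vertex of the update path $P_V$ lies in $I_{r'}^{(i-1)}$ (since those vertices reach $r'$ through arcs of $P_V$, all of which lie in $G^{(i-1)}$ because $a_i$ is the transition arc between $P_A$ and $P_V$). I then split the merges into Case A ($|I_{r'}^{(i-1)}| \le 10\log n$) and Case B ($|I_{r'}^{(i-1)}| \ge \alpha n$).

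Case A is handled by a direct counting argument: each Case A merge contributes at most $10\log n$ recourse, and since every merge strictly reduces the number of roots, at most $n-1$ merges can occur in total, giving total Case A recourse at most $10 n \log n$.

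Case B requires a more delicate amortized analysis. Writing the Case B recourse as $\sum_v N_v^B$, where $N_v^B$ counts merges at which $v \in I_{r'}^{(i-1)}$ and $|I_{r'}^{(i-1)}| \ge \alpha n$, I plan to bound $\mathbb{E}[N_v^B] = O(\log^2 n)$ per vertex. First, I use \Cref{lem:presInComp} and the monotone growth of $v$'s arborescence to show that $v$ can be absorbed into an arborescence whose root has large in-component at most $O(1/\alpha)$ times, because each such absorption enlarges $|T_v|$ by at least $\alpha n$ (the absorbing arborescence contains the absorbing root's large in-component, which lies inside that arborescence by \Cref{lem:inCompInArb}). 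Between such events, for $v$ to contribute to $N_v^B$ the in-component of $v$'s current root must cross the dichotomy threshold from small to large via a single arc insertion; I will bound the expected number of such threshold-crossing events by a per-iteration probability analysis using the uniform randomness of the arc arrivals together with the structure of \Cref{lemma: at most log n helper}.

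The main obstacle is this threshold-crossing analysis. Unlike the small-$p$ regime of \Cref{theorem: O(nlog(n))}, we cannot rely on a linear fraction of isolated vertices to keep root in-components trivially small. Instead, the bound must come from tracking how $v$'s root's in-component evolves in the random model, charging each Case B contribution of $v$ either to a root-change of $v$ (controlled by the arborescence-growth argument, contributing $O(1)$ per vertex) or to an arc insertion that grows a small in-component all the way past $\alpha n$ in a single step (which by the dichotomy must happen in one jump, and whose expected count per vertex is controlled by a union bound over the at most $O(n\log n)$ inserted arcs). Summing the amortized costs yields $\mathbb{E}[N_v^B] = O(\log^2 n)$, and combined with the Case A bound this gives total expected recourse $O(n \log^2 n)$.
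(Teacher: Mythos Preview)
Your proposal has a genuine gap at the very first step: the bound ``recourse of a merge $\le |I_{r'}^{(i-1)}|$'' is false. Your justification assumes that the new arc $a_i$ is always the transition arc between $P_A$ and $P_V$, but this need not hold. If the tail $u$ of $a_i$ already lies in $T'$, then every feasible path from any root to $r'$ must place $a_i$ strictly inside $P_V$ (the transition arc has its tail in a different arborescence, so it cannot be $a_i$). The vertices of $P_V$ preceding $a_i$ are in $T'$ but \emph{not} in $I_{r'}^{(i-1)}$: if any of them reached $r'$ in $G^{(i-1)}$, then the source root would reach $r'$ there too, contradicting maximality of $F^{(i-1)}$. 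One can easily build instances with $|I_{r'}^{(i-1)}| = 2$ where the unique feasible path forces deletion of $\Theta(|T'|)$ parent arcs. So Case~A collapses: you cannot cap each small-$I_{r'}$ merge at $10\log n$, only at $|T'|$, which may be $\Theta(n)$.

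The deeper issue is that you split on the wrong root. The paper cases on $|I_r|$, the in-component of the \emph{absorbing} root, and this is exactly what makes both halves work. When $r$ is in-large, $v$'s arborescence gains $|T_r|\ge \alpha n$ new vertices, so this happens $O(1/\alpha)$ times per vertex. When $r$ is in-small, \Cref{lem:presInComp} says that after the merge $v$'s new root still has in-component of size at most $10\log n$; hence the next deletion of $v$'s parent requires the random arc's head to land in that small set, which has probability $O(\log n/n)$ per step and yields $O(\log^2 n)$ over the $O(n\log n)$ relevant steps via a Chernoff-type domination. Your Case~B mixes the two roles---you invoke growth of $|T_v|$ by the absorbing root's in-component to bound an event defined by the absorbed root's in-component---and leaves the ``threshold-crossing'' count entirely unspecified. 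Once the per-merge bound $|I_{r'}|$ is abandoned, there is no leverage left in splitting on $|I_{r'}|$; the decomposition has to be reorganized around $|I_r|$ as in the paper.
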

\begin{proof}
First, note that by \Cref{lem:strongConn}, by the time we consider arcs of value $2\log n/ n$ our graph is strongly connected with high probability. It follows by the correctness of our algorithm (\Cref{lem:correctness}) that any time our algorithm performs an update, our graph has at most $5 n \log n$ many arcs with high probability. We condition on this.

In order to now bound the recourse we will fix a vertex $v$ and reason about when we have to delete $v$'s parent edges. In particular, by \Cref{lem:recHelper} we only must delete $v$'s parent arc when we merge along a feasible path from root $r$ to root $r'$ and $r'$ is the arborescence containing $v$. Call an arborescence in-large if the in-component of its root has size at least $\alpha n$ where $\alpha > 0$ is the constant in \Cref{lemma: at most log n}. Call it in-small if the in-component of its root has size at most $10 \log n$. By \Cref{lemma: at most log n} we know that we never have any in-components of size in $[10 \log n, \alpha n]$ with high probability and so these two cases partition all arborescences over the course of our algorithm. We case on what kind of arborescence the arborescence containing $r$ is in such merges.  In particular, we will bucket into a consecutive $n$ rounds of arc insertions and reason about the expected number of times $v$ must delete its parent arc.
\begin{itemize}
    \item First, we bound the number of times $v$ is merged into an in-large arborescence (meaning that after the merge $v$ belongs to an in-large arborescence). By \Cref{lem:inCompInArb} and \Cref{lem:correctness} we know that any in-component of a root is always fully contained in its arborescence and so any in-large arborescence also has cardinality at least $\alpha n$. Furthermore, when we use a feasible path $P$ from root $r$ to root $r'$, the arborescence of $r'$ joins that of $r$ (or, more precisely, the vertices of the arborescence of $r'$ become vertices of the arborescence of $r$). It follows that the number of times $v$ can be merged into an in-large arborescence is at most $O(1)$ (deterministically, since there are at most $O(1)$ in-large arborescences) and so the total recourse incurred by $v$ for such merges is at most $O(1)$ over our $n$ rounds.

    \item Next, we bound the number of times $v$ is merged into an in-small arborescence. 

   The resulting arborescence containing $v$ must be in-small by \Cref{lem:presInComp}.

    It follows that after one such merge $v$ is in an in-small arborescence. Furthermore, notice that when $v$ is in an in-small arborescence it must only delete its parent arc when the randomly inserted arc has a head in the in-component of the root of its arborescence (this follows by \Cref{lem:recHelper}). By assumption, this in-component has size at most $10 \log n$. Since we have assumed that the total number of arcs in the graph is never more than $5n \log n$, the probability of such a random arc having a head in this in-component is certainly at most
    \begin{align*}
        \frac{10 n \log n}{n(n-1) - 5n \log n} \leq \frac{20 \log n}{n}.
    \end{align*}

    We define for a vertex $v$ and every arc insertion $i$ a random variable $X_i$ that is $1$ if $v$ is in a in-small arborescence and needs to delete its parent arc after arc $i$ is added and $0$ otherwise. Note that we set $X_i$ to $0$ when $v$ gets merged into an in-large arborescence. Also, observe that the $X_i$ are not independent. However, the probability of each $X_i$ being $1$ is dominated by a Bernoulli and so we can still invoke Chernoff-bound-like results; the rest of this proof formalizes this idea.
    
     Let $Y_1, Y_2, \ldots$ be independent Bernoulli random variables with probability $p = 20 \log(n)/n$ and then let $X$ and $Y$ be the respective sums of all $X_i$ and $Y_i$ in the regime of this lemma (i.e.\ with value between $1.6/n$ and $5n\log n$). For simplicity of notation we assume that $X_1$ is the first arc we regard in this lemma. Given any $j \geq 1$ and any $S \subseteq [j-1]$, notice that we have 
    \begin{align}\label{eq:hda}
        \Pr(X_j = 1 \mid \cap_{i \in S} X_i = 1) \leq \Pr(Y_j = 1).
    \end{align}
    Furthermore, we have $E[Y] = \mu = 20\cdot (5\log^2(n) - 1.6\log(n)) = \Theta(\log^2(n))$. Thus, applying a Chernoff bound gives 
    \begin{align*}
        Pr(Y \ge 2\mu) \le e^{-4/3 \mu} \le e^{-\log^2(n)}.
    \end{align*}

    It follows that the probability that there is a subset $I$ of size at least $2 \mu$ such that $Y_i = 1$ for every $i \in I$ is at most $e^{-\log^2(n)}$ and so by the independence of the $Y_i$s we have
    \begin{align*}
        \sum_{I : |I| \geq 2\mu} \prod_{i \in I} \Pr(Y_i = 1) \leq e^{-\log^2(n)}.
    \end{align*} 
    Next, notice that
    \begin{align*}
        \Pr(X \geq 2 \mu) \leq \sum_{ k \geq 2\mu} \sum_{I : |I| = k} \Pr(\cap_{i \in I} X_i = 1).
    \end{align*}
    By the conditioning fact given by \Cref{eq:hda}, we have that $\Pr(\cap_{i \in I} X_i = 1)$ is at most $\prod_{i \in I} \Pr(Y_i = 1)$ and so combining the above we get
    \begin{align*}
        \Pr(X \geq 2 \mu) \leq \sum_{ k \geq 2\mu} \sum_{I : |I| = k} \prod_{i \in I} \Pr(Y_i = 1) \leq e^{-\log^2(n)}.
    \end{align*}

    Thus, we have that over the course of these $5 n\log(n)$ rounds with high probability $v$ is not merged into an in-small arborescence more than $O(\log ^ 2 n)$ times. 
\end{itemize}
Thus, a given vertex contributes, in expectation, $O(\log ^ 2 n)$ to recourse over $n \log(n)$ rounds of arc insertions and so the total expected recourse over these $n \log(n)$ rounds is $O(n \log ^ 2 n)$.
\end{proof}


\Cref{lem:recourse} is immediate by combining \Cref{theorem: O(nlog(n))} and \Cref{theorem: supercritical regime recourse}.

\subsection{Putting Correctness and Recourse Bounds Together}
Our proof of \Cref{thm:upper} is immediate by combining the above recourse and correctness bounds.
\upperTheorem*
\begin{proof}
    We use \Cref{alg:main}. Returned solutions are maximum arborescence forests by \Cref{lem:correctness}. The expected recourse is $O(m \cdot \log ^ 2 n)$ by \Cref{lem:recourse}. The runtime is trivial.
\end{proof}

\section{Conclusion}

In this work, we gave the first dynamic algorithms for an arborescence problem with arc insertions---namely, maximum arborescence forest---with amortized expected recourse $O(\log ^ 2 n)$ when arcs are randomly inserted. Likewise, we observed that random arc insertions are, in some sense, necessary for such results, since adversarial insertions can result in amortized recourse $\Omega(n)$.

\newpage
\printbibliography

\newpage
\appendix
\section{Adversarial Case for Arborescences}
\label{subsection: adversarial case}
In this section we show that for the min-cost (spanning) arborescence problem $\Omega(n^2)$ recourse is necessary after $O(n)$ arc insertions. This matches a trivial upper bound of $O(n^2)$. This problem is defined as follows.
\begin{definition}[Min-Cost (Spanning) Arborescence]
    Given digraph $G = (V,A)$, non-negative edge weights $w$ and root $r \in V$ our goal is to compute an $r$-out arborescence $T \subseteq A$ containing all vertices of $V$ of minimum cost.
\end{definition}

The incremental version of min-cost arborescence we consider is as follows.
\begin{definition}[Incremental Min-Cost Arborescence]\label{dfn:minCostIncArb}
    In the incremental version of min-cost arborescence, we start with an edge-weighted graph $G = (V, A)$ and a root $r$, some min-cost $r$-out arborescence $T \subseteq G$ and then weighted arcs $a^{(1)}, a^{(2)}, \ldots \not \in A$ arrive and our goal is to compute a sequence of $r$-out arborescences $T = T^{(0)}, T^{(1)}, T^{(2)}, \ldots$ where $T^{(i)}$ is a min-cost arborescence for $G^{(i)} := (V, A \cup \bigcup_{j \leq i} a^{(j)})$ while minimizing the recourse
    \begin{align*}
        \sum_i |T^{(i)} \setminus T^{(i-1)}|.
    \end{align*}
\end{definition}

It is clear that there are some graphs where even in this setting the adversary has no power, for example take an arborescence where all arcs have weight 0. Clearly this tree is optimal no matter what arcs get added and so no change can be forced. However, almost just as easily one can see that an adversary can add $\Theta(n) $ ($n := |V|$) arcs and with every arc the arborescence changes entirely. More formally restated this gives us the following theorem.

\begin{theorem}
    \label{theorem: adversary0or1}
    There is an instance of incremental min-cost arborescence (\Cref{dfn:minCostIncArb}) where all arc weights are $0$ or $1$ with $O(n)$ arc insertions that requires recourse $\Omega(n^2)$.
\end{theorem}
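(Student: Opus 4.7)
My plan is to construct an instance analogous to the bidirected-path example used in the proof of \Cref{thm:lower}, adapted to the setting where the root $r$ is fixed and all arc weights lie in $\{0,1\}$. The vertex set will be $\{r, v_1, \ldots, v_n\}$, and the initial graph will be built so that the unique minimum-cost arborescence is a specific Hamiltonian path from $r$ that uses a long weight-$1$ chain. Each of the $\Theta(n)$ arc insertions will be a carefully chosen weight-$0$ arc whose addition strictly decreases the minimum achievable cost and, crucially, makes every new minimum-cost arborescence traverse the backbone in an essentially opposite direction from the previous one.

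Concretely, I would first fix the initial graph to consist of a weight-$1$ arc $(r, v_1)$, the forward chain arcs $(v_i, v_{i+1})$ each of weight $1$, and the backward chain arcs $(v_{i+1}, v_i)$ each of weight $0$, for $i = 1, \ldots, n-1$. I would verify that the unique minimum-cost arborescence is $(r, v_1), (v_1, v_2), \ldots, (v_{n-1}, v_n)$: since $(r, v_1)$ is the only arc from $r$, it must be used, and then any attempt to route a vertex via a backward arc forces a cycle, so the weight-$1$ forward chain is forced and the cost is $n$. I then insert the arc $(r, v_n)$ with weight $0$; now the backward Hamiltonian arborescence $(r, v_n), (v_n, v_{n-1}), \ldots, (v_2, v_1)$ of cost $0$ is strictly cheaper and is the unique minimum-cost arborescence. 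This arborescence shares no arcs with the previous one, paying recourse $n$.

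To promote this one-shot construction into an instance that forces $\Omega(n^2)$ recourse over $\Theta(n)$ insertions, I would replicate the gadget in a cascading fashion, mirroring the ``extend and bidirect'' pattern of \Cref{thm:lower}: each new insertion alternately creates a new weight-$0$ entry point $(r, v_k)$ on the side opposite to the currently cheapest one, together with a paired weight-$0$ backbone arc that ensures the direction of the Hamiltonian arborescence must flip. The key invariant I would maintain is that after the $i$-th insertion, the unique minimum-cost arborescence covers the backbone in a direction whose symmetric difference with every previous minimum-cost arborescence is $\Omega(n)$.

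The main obstacle is the uniqueness requirement: with weights in $\{0,1\}$, ties between minimum-cost arborescences are possible, and a recourse-minimizing algorithm could exploit a tie to stay close to its previous solution. Overcoming this requires careful weight accounting to ensure that at each step the minimum-cost arborescence is either unique or that every minimum-cost arborescence has $\Omega(n)$ symmetric difference from every previous minimum-cost arborescence. Given this, the proof then consists of a direct case analysis analogous to that of \Cref{thm:lower}: each of the $\Theta(n)$ insertions contributes $\Omega(n)$ recourse, yielding total recourse $\Omega(n^2)$ and matching the trivial $O(n^2)$ upper bound.
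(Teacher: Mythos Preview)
Your construction has a genuine gap. After inserting the single weight-$0$ arc $(r,v_n)$, the arborescence $(r,v_n),(v_n,v_{n-1}),\ldots,(v_2,v_1)$ has cost $0$. Since all weights are nonnegative, no subsequent insertion can strictly decrease the optimum, and hence no further recourse can ever be forced: the algorithm simply keeps this cost-$0$ solution forever. Your proposed cascading extension---adding further weight-$0$ arcs $(r,v_k)$ on alternating sides---does not help, because each such arc merely creates an alternative optimum of the same cost; a recourse-minimizing algorithm is free to ignore it. In short, your instance buys exactly one flip of size $n$ and then stalls, giving total recourse $\Theta(n)$, not $\Theta(n^2)$.

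The missing idea is that to force $\Theta(n)$ flips you need the optimum cost to decrease by only $O(1)$ per insertion, so that it takes $\Theta(n)$ insertions to exhaust the available slack. The paper accomplishes this with a different geometry: a ``triangle'' whose two sides are length-$\Theta(n)$ weight-$1$ paths from the root down to two corner vertices, and whose bottom is a bidirected weight-$0$ path of length $\Theta(n)$ between the corners. Each inserted arc is a single weight-$0$ arc that shortcuts one more weight-$1$ side arc, reducing the optimum by exactly one; crucially, because the new cheap entry point is on the side opposite the previous one, the unique optimum must now traverse the entire bottom in the reversed direction. Thus every other insertion forces $\Theta(n)$ recourse, and there are $\Theta(n)$ such insertions before the sides are exhausted. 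Your bidirected-path idea captures the ``flip'' mechanism but not the cost-budgeting that makes it repeatable.
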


\begin{proof}
    Figure \ref{0or1weight} shows the graph. The idea is to substitute weight $1$ arcs with $0$ weight arcs in a way that forces the (unique) min-cost arborescence to flip all arcs on a path consisting of $\Theta(n)$-many arcs. The three parts of the ``triangle" should all have length $\Theta(n)$. The bottom is necessary for having $\Theta(n)$ arc deletions per improvement where as the sides are used for enforcing $\Theta(n)$ changes. Note that it is necessary for the new arcs on the sides to go up the triangle instead of downwards as otherwise the direction on the bottom would not be forced to change. 
\end{proof}

\begin{figure}[h]
    \centering
    \begin{subfigure}{0.3\textwidth}
    \includegraphics[width=0.95\linewidth]{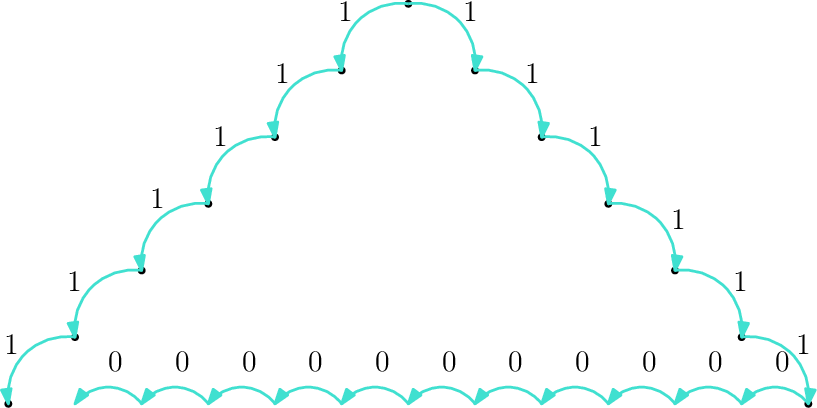} 
    \caption{}
    \label{manychanges1}
    \end{subfigure}
    \begin{subfigure}{0.3\textwidth}
    \includegraphics[width=0.95\linewidth]{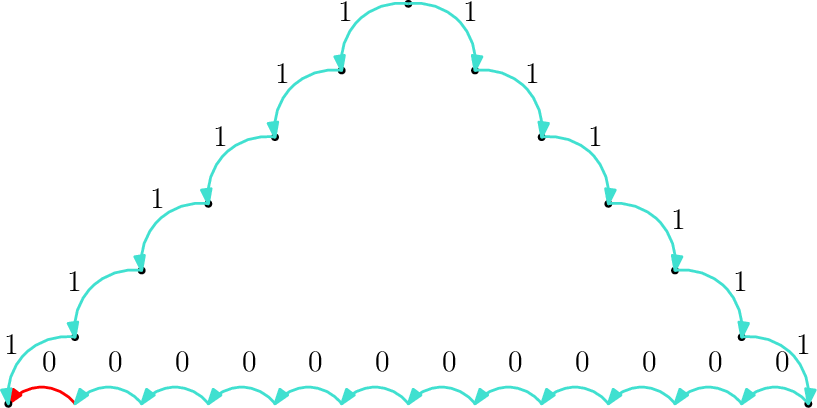} 
    \caption{}
    \label{manychanges2}
    \end{subfigure}
    \begin{subfigure}{0.3\textwidth}
    \includegraphics[width=0.95\linewidth]{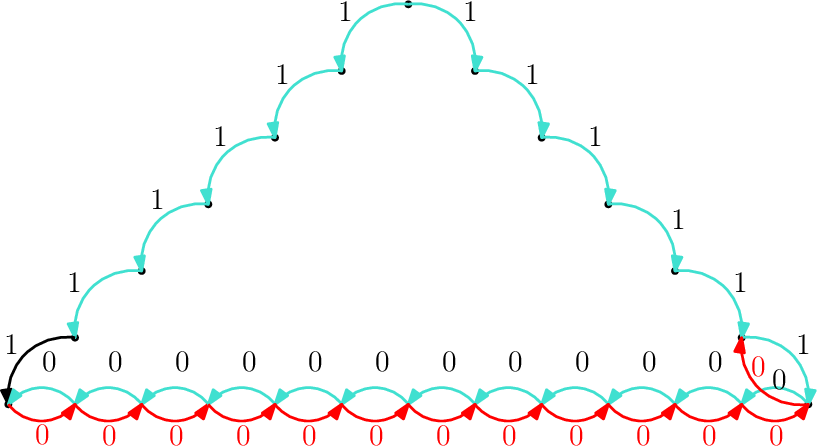} 
    \caption{}
    \label{manychanges3}
    \end{subfigure}
    \begin{subfigure}{0.3\textwidth}
    \includegraphics[width=0.95\linewidth]{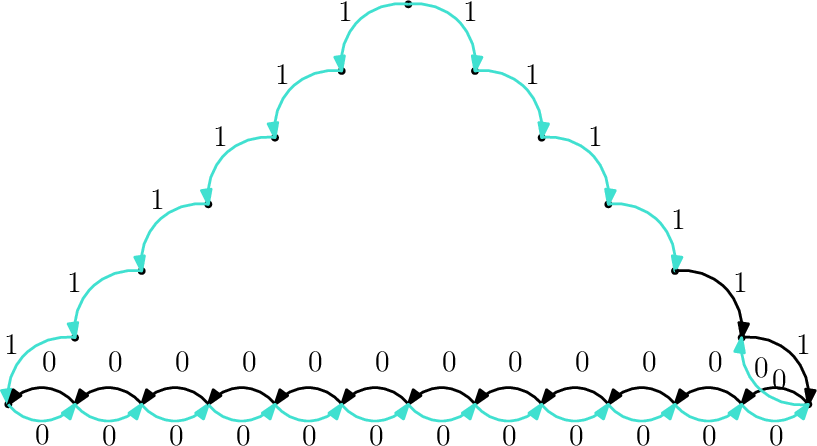} 
    \caption{}
    \label{manychanges4}
    \end{subfigure}
    \begin{subfigure}{0.3\textwidth}
    \includegraphics[width=0.95\linewidth]{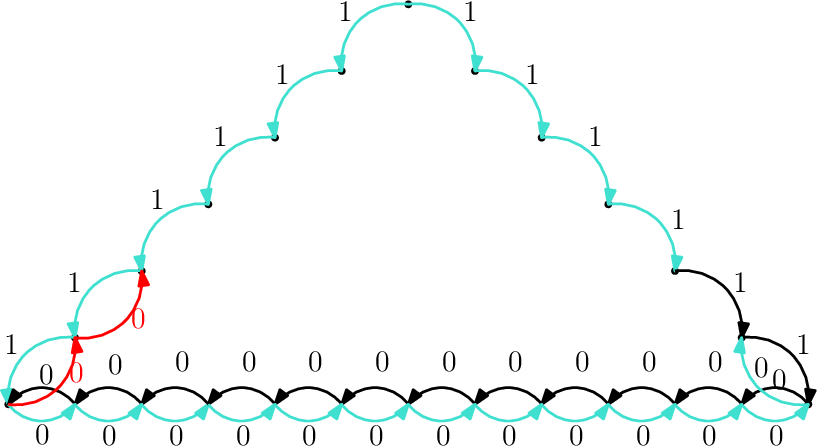} 
    \caption{}
    \label{manychanges5}
    \end{subfigure}
    \begin{subfigure}{0.3\textwidth}
    \includegraphics[width=0.95\linewidth]{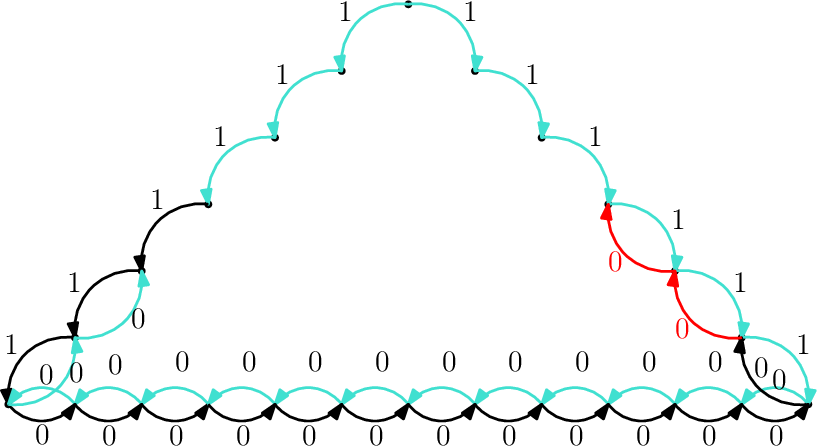} 
    \caption{}
    \label{manychanges6}
    \end{subfigure}
    
    \caption{Red arcs denote arcs that will be added in the next step(s), the turquoise arcs form the optimal arborescence and black arcs are simply currently not in use. The adversary completes and bi-directs the bottom of the triangle and introduces a weight $0$ arc on the right side of the triangle which flips all arcs from the bottom. This is then repeated by adding two $0$ cost arcs on the left side which again flips all bottom arcs for the optimal arborescence. After the bottom arcs have been added, every 2nd arc will flip all bottom arcs which comprise of $n/3$ of all vertices, so a linear number of arcs get swapped every two arcs which gives $O(n)$ arcs added and $\Omega(n^2)$ changes.}
    \label{0or1weight}

    \end{figure}

\section{Algorithm for Dynamic Arborescences}
In this section we give an algorithm for computing and updating the minimum arborescence for dynamically changing graphs. The reason this is in the appendix is that it does not really align with the rest of the paper.

The setting still largely remains the same; Single weighted arcs can get added successively, be it adversarially or randomly, and the goal is to maintain a minimum weight arborescence at all times. We now give the algorithm which has two subroutines.  

\begin{algorithm}[H]
\caption{Arborescence Main Algorithm}\label{alg:cap}
\begin{algorithmic}
\Require $O \gets $ arbitrary r-out-arborescence
\For{$v \in V$ bottom up in $O$} 
    \State $(O',c') \gets GROW(v,c,O)$
    \If{O' = O}
        \State $c' \gets c$
    \ElsIf{$O' \ne O$}
        \State $O \gets O'$
        \State \text{restart with } $c_0$ and $O$
    \EndIf
\EndFor

\end{algorithmic}
\end{algorithm}

\begin{algorithm}[H]
\caption{GROW(v,c,O)}\label{alg:cappp}
\begin{algorithmic}
\Require $c' \gets c$, $O' \gets O$
\While{$c'(Par_O(v)) > 0$}
    \State $S \gets u \in O_v$ s.t. $u$ has distance $0$ to $v$ w.r.t. $c'$
    \State $m \gets$ min value arc into $S$ with source in $O_v \cup Par_O(v)$
    \State $\forall a \in \delta^-(S)$: $c'(a) \gets c'(a) - m$
    \State Check if an arc with updated weight $<0$ into $S$ exists
    \If{$\exists a''\in \delta^-(S)$ s.t. $c'(a'') < 0$ then UPDATE($v,a''$)}
    \EndIf
    \State Return($O',c'$)
\EndWhile

\end{algorithmic}
\end{algorithm}

\begin{algorithm}[H]
\caption{UPDATE($v,a''$)}\label{alg:cappppp}
\begin{algorithmic}
\State Let $a'' = (u, u_1)$
\State $P \gets (u, u_1, ..., v)$ where $c'(u_1, u_2) = ... =  c'(u_l, v) = 0$
\State Delete $Par_O(u_1)$,  $Par_O(u_2)$, ...,  $Par_O(v)$ from $O'$
\State Add $P$ to $O'$

\end{algorithmic}
\end{algorithm}

Denote by $\mathcal{L}$ the laminar family of sets $S$ that get found in Algorithm \ref{alg:cappp} (so a set in $\mathcal{L}$ consists of all vertices that reach their root with updated distance $0$). We start by proving the following claim which will be used for proving correctness of Algorithm $\ref{alg:cap}$.
\begin{claim}
    Let $r'$ be some vertex, $T$ the sub-arborescence starting at $r'$ and $S \in \mathcal{L}$ the subset of vertices in $T$ with distance $0$ to $r'$. Denote by $a'$ the parent arc of $r'$ in the arborescence $O$. Then for any arc $a''$ with starting vertex outside of $T$ and end vertex inside of $S$ the following holds:\\
    There exists a set of arcs $O'$ with $n-1$ arcs s.t.

    1) $r'' \rightarrow T$
    
    2) $c(O') = c(O) + c_S(a'') - c_S(a')$
    
    3) $r\rightarrow V$

    where $c_S(a)$ is the updated cost of a generic arc $a$.

\end{claim}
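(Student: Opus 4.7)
The plan is to construct $O'$ explicitly as the arc-swap that UPDATE performs, and then verify the three items in turn. Writing $P = (r'', u_1, u_2, \ldots, u_l, r')$ with $u_0 := r''$, I would set
\[
O' := \bigl(O \setminus \{a', Par_O(u_1), \ldots, Par_O(u_l)\}\bigr) \cup \{a'', (u_1, u_2), \ldots, (u_{l-1}, u_l), (u_l, r')\}.
\]
A direct count gives $|O'| = n-1$, since $l+1$ arcs are removed and $l+1$ are added. Moreover, for each $i = 1, \ldots, l$ the arc $(u_{i-1}, u_i)$ is the unique new in-arc of $u_i$, replacing $Par_O(u_i)$, and $(u_l, r')$ is the unique new in-arc of $r'$, replacing $a' = Par_O(r')$. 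No other in-arcs change, so every non-root vertex retains in-degree exactly one in $O'$.

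For items (1) and (3), the observation is that every arc modified by the swap lies either inside $T$ or is the single arc $a'$ entering $T$; nothing outside $T$ is touched. Hence vertices in $V \setminus T$ are reached from $r$ via their old $O$-paths. For $w \in T$, my plan is to reroute: follow the unmodified part of $O$ from $r$ to $r''$ (possible because $r'' \notin T$), take $a''$ to $u_1$, walk along the new path-arcs until reaching the deepest $u_i$ that is an $O_{r'}$-ancestor of $w$ (or continue all the way to $r'$ if no such $u_i$ exists), and finally descend via the untouched sub-tree arcs of $O_{r'}$ down to $w$. A brief case-split on which, if any, $u_i$ lies on the old $r'$-to-$w$ path completes item (3); item (1) follows immediately from $a'' \in O'$.

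For the cost identity in item (2), the key facts I would establish are: (i) every arc of $P$ other than $a''$ has $c_S = 0$ by construction of $P$; and (ii) every removed tree-arc $Par_O(u_i)$ likewise has $c_S = 0$, since $u_i \in S$ implies that the tree path from $r'$ to $u_i$ in $O_{r'}$ has total $c_S$-cost zero and its last arc is precisely $Par_O(u_i)$. Together these yield $c_S(O') - c_S(O) = c_S(a'') - c_S(a')$. Because $O$ and $O'$ are both $r$-arborescences, for every $T' \in \mathcal{L}$ exactly one arc of each lies in $\delta^-(T')$, so the dual correction $\sum_{T' \in \mathcal{L}} y_{T'}$ that converts $c_S$ into $c$ contributes equally to $c(O)$ and $c(O')$ and cancels in the difference. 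Combining, $c(O') - c(O) = c_S(a'') - c_S(a')$, establishing item (2).

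The main obstacle I expect is step (ii): rigorously justifying that $c_S(Par_O(u_i)) = 0$ requires the invariant, maintained by GROW, that within $S$ the sub-tree of $O_{r'}$ uses only $c_S = 0$ arcs; and the overall cost identity requires precise laminar-dual bookkeeping to show that the dual correction is the same constant across all $r$-arborescences.
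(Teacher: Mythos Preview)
Your route is genuinely different from the paper's. The paper does \emph{not} analyse the single flat swap along a zero-cost path; it argues by induction over the laminar family $\mathcal{L}$. At level $k$ it contracts each maximal proper set $S_{x_i}\in\mathcal{L}$ contained in $S$ to a single node, performs the swap on the path $r''\to x_1\to\cdots\to x_l\to r'$ in this contracted graph, and then invokes the induction hypothesis inside every $S_{x_i}$ to rebuild a valid subtree there. Condition~(2) is obtained not via any dual-cancellation argument but directly from the IH: swapping the old parent arc $\tilde a'$ of $S_{x_i}$ for the new one $\tilde a''$ changes the cost by $c_S(\tilde a'')-c_S(\tilde a')=0-0$, so the only net change is $c_S(a'')-c_S(a')$.

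Two places where your direct plan is incomplete and the paper's induction is doing real work:

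\textbf{(a) The path need not stay in $T$.} You assert that ``every arc modified by the swap lies either inside $T$ or is the single arc $a'$''. But $u_1\in S$ only says $u_1$ has \emph{some} zero-cost path to $r'$ in the whole graph; nothing forces that path to remain in $T=O_{r'}$. If some $u_i\notin T$, then deleting $Par_O(u_i)$ alters the tree outside $T$, and your ``unmodified $r\to r''$ path'' can break (e.g.\ if $r''$ is an $O$-descendant of such a $u_i$). The paper's contraction confines everything to the level-$(k-1)$ sets inside $S$ and lets the IH handle any rerouting there.

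\textbf{(b) The cost argument.} Your inline reason for $c_S(Par_O(u_i))=0$ reads the direction of ``distance $0$ to $r'$'' backwards; you acknowledge this and appeal to an invariant that all tree arcs of $O_{r'}$ have $c_S=0$. That invariant is true, but proving it is precisely the bottom-up induction the paper is running --- you have hidden the induction, not removed it. Your dual-correction step also needs $|O'\cap\delta^-(T')|=|O\cap\delta^-(T')|$ for every $T'\in\mathcal{L}$; even if one grants $|O\cap\delta^-(T')|=1$ via upward-closure of $T'$ in $O$, the new tree $O'$ has different parents on the $u_i$, so $|O'\cap\delta^-(T')|=1$ is not automatic. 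The paper never needs this: its cost identity comes straight from the IH applied to each $S_{x_i}$.
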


\begin{claimproof}
    We omit showing that the family $\mathcal{L}$ is laminar.\\
    Firstly $S \in \mathcal{L}$ means that all vertices in $S$ have updated distance 0 to the subroot of $S$.

    Now we do induction over $\mathcal{L}$ starting with the smallest sets in $\mathcal{L}$, the leafs of the arborescence.

    Let $v$ be any leaf, the claim holds trivially (just exchange $a'$ and $a''$).

    Now suppose the claim holds for the first $k-1$ ``levels" of the laminar family $\mathcal{L}$ and consider $S \in \mathcal{L}$ on ``level" $k$.

    Let $a''$ be an arc from outside of $T$ into $S$ and let $r''$ be the end vertex of $a''$. Contract all sets in the laminar family on level $k-1$ to one vertex each and then consider the path $P = (r'', x_1, x_2, ..., x_l, r')$ from the new sub-root $r''$ to the old sub-root $r'$ (this path does exist as it existed before contracting the laminar sets). Now delete $par(r''), par(x_1),..., par(x_l), par(r')$ and add $P$ to the arborescence.

    Expand $r'', x_1, ..., x_l, r'$ with induction hypothesis and call the sets $S_{r''},S_{x_i}, S_{r'}$ respectively. This gives new subtrees for level $k-1$, however, these have the same cost as the original trees. Indeed consider any $S_{x_i}$ (for $S_{r'}$ and $S_{r''}$ the proof works the same). By condition 2 from the claim we have $c(O') = c(O) + c_S(\Tilde{a}'') - c_S(\Tilde{a}')$ where $\Tilde{a}''$ is the arc from $S_{x_{i-1}}$ to $S_{x_{i}}$ and $\Tilde{a}'$ is the parent arc of $S_{x_i}$. Clearly $c_S(\Tilde{a}'') = 0$ since it is used in the 0 cost path from $r''$ to $r'$. Denote by $z$ the end vertex of $\Tilde{a}'$. We also have $c_S(\Tilde{a}') = 0$ as the set $S_{x_i}$ had the parent arc $\Tilde{a}'$ set to zero when we looked at vertex $z$ in the previous iteration of the induction (level $k-1$). So this means that the cost inside of $S$ has not changed and we only exchange $a'$ and $a''$ proving condition 2 from the claim. 

    We now prove condition 1. Let $v \in T$ be any vertex. We show that $r''$ can reach $v$. If $v \in S_{x_i}$ for some $i$ then this holds by just following the path from $r''$ to $S_{x_i}$ and then the induction hypothesis for $S_{x_i}$ finishes the path to $v$ (same if $v \in S_{r'}$ or $v \in S_{r''}$). So now for the other case we consider the unique path from $r'$ to $v$ in the original subarborescence (back when $r'$ was the sub-root) and let $w$ be the last vertex in that original path that is in $\bigcup_i S_{x_i} \cup S_{r'} \cup S_{r''}$. Now simply follow the path from $r''$ to $w$ and from $w$ down to $v$ (the path from $w$ to $v$ has not been changed). This shows condition 1.

    Condition 3 of the claim holds since we do not change the arborescence outside of $T$ apart from $a'$ and $a''$. Furthermore, $r$ reaches the starting vertex of $a''$ and by condition 1 it then also reaches all vertices inside of $T$ (i.e. it reaches $r''$ with $a''$ and $r''$ can reach all vertices in $T$).

    The number of arcs does not change since we delete the same number of arcs as we add to the arborescence.
\end{claimproof}

We now prove that the algorithm is correct, which will conclude this section. We are unsure whether the algorithm is different from the standard Chu–Liu/Edmonds' algorithm as we turned our attention more towards branchings.
\begin{theorem}
The algorithm ``Arborescence Main Algorithm" is correct.
\end{theorem}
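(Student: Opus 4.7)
The plan is to separate correctness into three parts: (i) the algorithm maintains an $r$-out-arborescence at all times, (ii) every call to \textsc{update} strictly decreases the total cost, and (iii) when the outer for-loop terminates without ever restarting, the current $O$ is of minimum cost. The already-proven claim does most of the work for (i) and (ii); part (iii) is where the main argument lies and is the step I expect to be the principal obstacle.

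For (i) and (ii), I would apply the claim directly. Whenever \textsc{grow}$(v,c,O)$ discovers an arc $a''$ with $c'(a'')<0$ entering the laminar set $S$ of vertices at $c'$-distance $0$ from $v$, the claim (applied to the sub-arborescence $T$ rooted at $v$, with $a'=\mathrm{Par}_O(v)$ and our arc $a''$) hands us an $r$-out-arborescence $O'$ with
\[
c(O') \;=\; c(O) + c_S(a'') - c_S(a').
\]
Condition (3) of the claim guarantees $O'$ is still rooted at $r$ and spans $V$, and a short bookkeeping check shows that this $O'$ is exactly the set of arcs produced by \textsc{update}. Since $c_S(a')=c'(\mathrm{Par}_O(v))>0$ is maintained as the loop invariant of \textsc{grow}, and $c_S(a'')<0$ by the triggering condition, we get $c(O') < c(O)$ strictly. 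Because the number of $r$-out-arborescences of $G$ is finite and every \textsc{update} strictly decreases cost, the outer \textbf{restart} mechanism can fire only finitely often, so the algorithm terminates.

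For (iii), suppose the algorithm terminates without calling \textsc{update} during a full pass over $V$. I would use the values $\{c'(a)\}_a$ produced by successive \textsc{grow} calls as a certificate of optimality in the spirit of the LP dual for spanning arborescences: for each laminar set $S \in \mathcal{L}$ encountered, the subtraction step $c'(a) \gets c'(a) - m$ for $a \in \delta^-(S)$ corresponds to raising a dual variable $y_S$ by $m$, while the invariants maintained in \textsc{grow} ensure (a) $c'(a) \ge 0$ for every arc $a$ not in $O$, (b) $c'(a)=0$ for every arc $a \in O$, and (c) $\sum_{S \in \mathcal{L}} y_S \,\cdot |\delta^-(S) \cap O|$ telescopes to the exact cost of $O$ because exactly one arc of $O$ enters each $S$. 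Complementary slackness then certifies that $O$ is a minimum-cost $r$-out-arborescence. The hard part is verifying (a): if the full pass terminates with no UPDATE, then for every vertex $v$ either $c'(\mathrm{Par}_O(v))=0$ already, or \textsc{grow} exited its while-loop after confirming no negative-weight arc entered $S$; combining these across the bottom-up order, the laminar family $\mathcal{L}$ covers enough of $G$ to rule out any negative $c'$-weight arc anywhere, which is what (a) requires.

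The main obstacle, as noted, is step (iii): making the duality argument airtight requires carefully checking that the laminar family $\mathcal{L}$ built up across the bottom-up sweep really does assign consistent dual values (the subtraction in \textsc{grow} must not "double count" arcs shared between nested sets) and that the termination condition of \textsc{grow} when $c'(\mathrm{Par}_O(v))=0$ is equivalent to local optimality at $v$. If this becomes subtle, I would instead argue optimality by a direct exchange argument: assuming $O$ is not minimum, pick a minimum arborescence $O^\ast$ and consider the symmetric difference; some vertex $v$ must have $\mathrm{Par}_{O^\ast}(v)$ enter the $c'$-distance-$0$ set $S$ of $v$ with negative updated weight, contradicting the terminal state of \textsc{grow}$(v,\cdot,\cdot)$. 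This exchange viewpoint sidesteps the dual-feasibility verification and may ultimately be the cleanest route.
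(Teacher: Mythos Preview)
Your proposal is correct and takes essentially the same approach as the paper: the claim is invoked to show that the algorithm maintains an $r$-out-arborescence throughout (your part (i)), and optimality (your part (iii)) is certified via LP duality by interpreting each subtraction of $m$ over $\delta^-(S)$ in \textsc{grow} as raising a dual variable $y_S$, so that tree arcs end at reduced cost $0$ and non-tree arcs stay nonnegative. The paper's proof is terser---it omits your explicit termination argument (ii) and dispatches dual feasibility (your condition (a)) with the one-line remark that \textsc{update} fires before any reduced weight can go negative---but the strategy is identical.
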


\begin{claimproof}
    First we show that the algorithm always returns an r-arborescence in every step. We show this by proving that $r$ can reach any vertex and by showing the number of arcs (in the tree) does not increase.

    Let $v$ be the vertex the algorithm is currently considering. If the GROW Algorithm does not change the tree then we have nothing to show. If it changes the tree then it calls UPDATE which is basically the previous claim. However, this happens with an edge coming from outside the subarborescence starting at v. The claim already tells us that $r$ can reach $V \setminus T$ with $T$ being the subarborescence (before the call of UPDATE) starting at $v$. Now denote $r''$ as the new subroot and $a''$ the parent arc of $r''$. We know that the source vertex, call it $w$, of $a''$ is outside of the subtree starting at $r''$ (which is the same set of vertices as $T$) but this tells us that $r$ can reach $w$ and with $a''$ it can reach $r''$ which, again by the claim, can reach all vertices inside of $T$. 

    Furthermore, the claim tells us that the number of arcs used in the tree does not change so we are done. \\

    Remains to show optimality. We do this by using strong LP-duality and providing cuts with values that satisfy the constraints of the dual (of r-arborescence primal) and the value of the dual is equal to the value of the arborescence we find (thus showing it is minimal).

    Observe that in the GROW subroutine we find sets $S$ and values $m$ which we subtract from all arcs incoming to $S$. These will precisely be the cuts we use for the dual packing. Observe that if all arcs used in the tree have non negative weight and end up with weight $0$ we both have a feasible solution to the dual (no negative arc weights means that all constraints are satisfied) and the value of the dual is equal to the value of the primal (all weights of arcs in the tree have been reduced to $0$).

    Now clearly no arc can have it's weight reduced to a negative number as whenever that would happen we call the UPDATE subroutine.

\end{claimproof}
\end{document}